\documentclass[journal]{IEEEtran}

\usepackage{amsfonts,amscd,mathrsfs,amsmath,amsthm,amssymb}
\usepackage{mathtools}
\usepackage{xparse}
\usepackage{graphicx}
\usepackage{float}
\usepackage{pifont}
\usepackage[dvipsnames]{xcolor}
\usepackage{colortbl}
\usepackage{multirow}
\usepackage{enumerate}   
\usepackage{comment}
\usepackage{booktabs}
\usepackage{cancel}
\usepackage{listings}
\usepackage{xcolor}
\usepackage{Files/lieart}
\usepackage{Files/macros}

\usepackage{youngtab}

\PassOptionsToPackage{hyphens}{url}\usepackage{hyperref}

\hypersetup{
    colorlinks=true,
    linkcolor=magenta!80!black,  
    citecolor=magenta!80!black,
}
\usepackage{url} 

\usepackage[capitalise]{cleveref}
\usepackage{physics}
\usepackage{bm}
\usepackage{bbm}
\usepackage{caption}

\renewcommand{\SU}{\mathrm{SU}}

\newcommand{\Sym}{\mathrm{Sym}}
\renewcommand{\L}{\mathcal{L}} % linear operators

\renewcommand{\CC}{\mathbb{C}}
\renewcommand{\F}{\bmsf{F}}

\renewcommand{\Ad}{\bmsf{Ad}}
\renewcommand{\Sym}{\mathrm{Sym}}

\newcommand{\Hom}{\text{Hom}}

\newcommand{\lsuper}{\langle\!\langle}
\newcommand{\rsuper}{\rangle\!\rangle}

\renewcommand{\L}{\mathscr{L}} % linear operators
\renewcommand{\E}{W} 
\renewcommand{\B}{\mathcal{B}} % Error Basis
\newcommand{\proj}{\mathscr{P}} % superoperator proj
\newcommand{\twirl}{\mathscr{T}} % superoperator twirl

\renewcommand{\ip}[1]{\expval{#1}} %inner product on L(H)
\begin{document}

\title{Orthogonal Polynomials and the MacWilliams Transform for Permutation-Invariant Qudit Codes}

\author{ Ian~Teixeira

\thanks{Ian Teixeira is affiliated with the Department of Mathematics, University of California, San Diego, CA 92093}% <-this % stops a space
}

\maketitle

\begin{abstract}
We derive an explicit formula for the intrinsic MacWilliams transform for
permutation-invariant qudit codes.  Such codes naturally live in symmetric
power representations, where the relevant error sectors are determined by
the irreducible decomposition of the conjugation action on the associated
operator space.  Using the multiplicity-free structure of this decomposition
and the corresponding intertwiner algebra, we identify the intrinsic
MacWilliams matrix with a finite Racah transform.  The entries are given by
a terminating hypergeometric series, and the rows of the matrix are Racah
orthogonal polynomials with parameters determined explicitly by the block
length and local dimension.  Computing the spectrum of
the degree-one twirl reveals that this spectrum lies on an affine quadratic
lattice. Then we derive a tridiagonal multiplication rule from the
representation theory of the adjoint sector.  As consequences, we obtain
closed-form orthogonality, detailed-balance, and involutivity
identities for the transform.  The resulting formula supplies an explicit
MacWilliams matrix for computing linear programming bounds on permutation-invariant
qudit codes.
\end{abstract}
\tableofcontents

\section{Introduction}

MacWilliams identities are a central structural tool in coding theory:
they relate the weight distribution of a code to that of its dual and,
together with positivity constraints, lead to Delsarte-type linear
programming bounds through the theory of association schemes
\cite{MacWilliams1963,MacWilliamsSloane1977,Delsarte1973,BannaiIto1984}.
Quantum analogues of these identities were developed using quantum weight
enumerators and have played an important role in linear programming bounds
for quantum codes
\cite{ShorWE,rainsWE,GF4codes,Shadows,AshikhminLitsyn1999}.

Permutation-invariant quantum codes are a natural class of nonadditive
quantum codes whose code space is contained in the permutation-invariant sector of \(n\) qudits. They were introduced in the study of exchange
errors and symmetric quantum error correction
\cite{RuskaiPRL,2004permutation}, and have since been developed in several
directions, including higher-dimensional logical spaces, qudit codes,
amplitude-damping codes, and deletion-error models
\cite{PI2,PImore,PIqudit1,OuyangChao,PIdelete,ShibayamaHagiwara,PI3}.
For local dimension \(q\), the permutation-invariant subspace is
\(V_n=\Sym^n(\mathbb C^q)\).  In the intrinsic framework of
\cite{usIntrinsicCodes,usIntrinsicMacWilliams}, the relevant weight
enumerators are indexed not by ordinary Hamming weights, but by the
irreducible sectors appearing in the conjugation representation of
\(\SU(q)\) on \(\L(V_n)\).  The purpose of this paper is to compute the
corresponding intrinsic MacWilliams transform explicitly, thereby providing
the transform matrix needed for linear programming bounds on
permutation-invariant qudit codes.

The representation \(V_n=\Sym^n(\mathbb C^q)\) is the irreducible
\(\SU(q)\)-module of highest weight \((n,0,\dots,0)\)
\cite{fultonharris}.  Under the conjugation action of \(G=\SU(q)\), its
operator space has the multiplicity-free decomposition
\[
\L(V_n)\cong \bigoplus_{a=0}^n \E_a,
\qquad
\E_a\cong(a,0,\dots,0,a).
\]
Thus the irreducible sectors \(\E_a\) play, for permutation-invariant
qudit codes, the role played by Hamming-weight sectors in the usual
multi-qudit setting.  In particular, the intrinsic weight enumerators of
\cite{usIntrinsicMacWilliams} are indexed by these sectors.

Let \(\proj_a\) denote the orthogonal projector onto \(\E_a\), and let \(\twirl_b\) denote the twirling operator obtained by summing over an orthonormal basis of \(\E_b\). The projectors
\(\{\proj_a\}\) and twirls \(\{\twirl_b\}\) form two natural bases of the
commutative intertwiner algebra
\[
\mathcal A=\Hom_G(\L(V_n),\L(V_n)).
\]
The intrinsic MacWilliams matrix \(M=(M_{ba})\) is the change-of-basis
matrix between them:
\[
\twirl_b=\sum_{a=0}^n M_{ba}\proj_a.
\]
Equivalently, \(M_{ba}\) is the scalar by which \(\twirl_b\) acts on the
sector \(\E_a\). Thus the problem of computing the intrinsic MacWilliams transform becomes
the spectral problem of determining the joint eigenvalues of the twirling
operators on the irreducible sectors \(\E_a\).

Our main result is that this intrinsic MacWilliams matrix is a finite Racah
transform, in the sense of the Racah orthogonal polynomial system
\cite{KoekoekLeskySwarttouw2010,Wilson1980}.  More precisely, with
\(d_b=\dim\E_b\) and \(N=\dim V_n=\binom{n+q-1}{n}\), we prove that
\[
M_{ba}
=
\frac{d_b}{N}
\;
{}_4F_3\!\left(
\begin{matrix}
-b,\ b+q-1,\ -a,\ a+q-1\\
q-1,\ -n,\ n+q
\end{matrix}
;1
\right).
\]
Equivalently, after an affine change of variables
\(x_a=A-Ba(a+q-1)\), the row polynomials \(p_b(x_a)=M_{ba}\) are Racah
polynomials with parameters
\[
\alpha=q-2,\qquad
\beta=0,\qquad
\gamma=n+q-1,\qquad
\delta=-n-1.
\]

The proof has four main ingredients.  First, the multiplicity-free
decomposition of \(\L(V_n)\) gives the projector and twirl bases of the
intertwiner algebra \(\mathcal A\).  Second, we compute the spectrum of the
degree-one twirl \(\twirl_1\) by expressing it as an affine function of the
conjugation Casimir.  The resulting spectral grid is an affine image of the
quadratic lattice \(a(a+q-1)\), and the distinctness of these eigenvalues
shows that \(\twirl_1\) generates \(\mathcal A\).  Third, a Pieri-rule
support lemma shows that multiplication by \(\twirl_1\) is tridiagonal in
the twirl basis.  Hence the row functions \(p_b\) satisfy a three-term
recurrence and form a finite orthogonal polynomial system
\cite{Chihara1978,Szego1975}.  Finally, comparison with the Racah lattice
and Racah measure identifies this system with Racah polynomials and gives
the stated \({}_4F_3\) formula.

As consequences, the transform satisfies the weighted orthogonality relation
\(MDM^T=D\), detailed balance \(d_aM_{ba}=d_bM_{ab}\), and the involutivity
relation \(M^2=I\), paralleling the structural identities of classical
MacWilliams transforms
\cite{MacWilliams1963,MacWilliamsSloane1977,Delsarte1973,BannaiIto1984}.
Thus the intrinsic MacWilliams transform for permutation-invariant qudit
codes is explicitly computable, self-inverse, and governed by the classical
Racah orthogonal polynomial system.  In particular, it gives a closed-form
transform matrix for Delsarte-type linear programming bounds in the
permutation-invariant qudit setting.

\section{Representation-Theoretic Setup}
\label{sec:setup}

Let
\[
  V_n := \Sym^n(\mathbb C^q)
\]
be the \(n\)th symmetric power of the defining representation of
\(G:=\SU(q)\).  This is the irreducible \(G\)-representation of highest
weight \((n,0,\dots,0)\), and
\begin{equation}
  \label{eq:dimVn}
  \dim V_n=\binom{n+q-1}{n}.
\end{equation}
Write \(\L(V_n)\) for the space of linear operators on \(V_n\), equipped
with the Hilbert--Schmidt inner product
\begin{equation}
  \label{eq:hs-ip}
  \ip{X_1,X_2} := \Tr(X_1^\dagger X_2),
  \qquad X_1,X_2\in \L(V_n).
\end{equation}
The group \(G\) acts on \(\L(V_n)\) by conjugation:
\begin{equation}
  \label{eq:conj-action}
  g\cdot X := \rho_n(g)X\rho_n(g)^\dagger,
  \qquad g\in G,\; X\in \L(V_n),
\end{equation}
where \(\rho_n\) denotes the representation of \(G\) on \(V_n\).

\subsection{Multiplicity-free decomposition of \texorpdfstring{\(\L(V_n)\)}{L(Vn)}}

The conjugation representation on
\(\L(V_n)\cong V_n\otimes V_n^*\) is multiplicity-free.  Concretely,
\begin{equation}
  \label{eq:Lv-decomp}
  \L(V_n)\cong \bigoplus_{a=0}^n \E_a,
  \qquad
  \E_a\cong (a,0,\dots,0,a),
\end{equation}
where \((a,0,\dots,0,a)\) denotes the irreducible \(\SU(q)\)-representation
with that Dynkin label.  We write
\[
  d_a := \dim \E_a.
\]
For each \(a\), let
\[
  \proj_a:\L(V_n)\to \L(V_n)
\]
denote the orthogonal projector onto \(\E_a\) with respect to
\eqref{eq:hs-ip}.  These projectors satisfy
\begin{equation}
  \label{eq:projector-relations}
  \proj_a\proj_b=\delta_{ab}\proj_a,
  \qquad
  \sum_{a=0}^n \proj_a=I_{\L(V_n)}.
\end{equation}

\subsection{The intertwiner algebra}

Let
\begin{equation}
  \label{eq:intertwiner-algebra}
  \mathcal A := \Hom_G\bigl(\L(V_n),\L(V_n)\bigr)
\end{equation}
denote the algebra of \(G\)-equivariant endomorphisms of \(\L(V_n)\).
We equip \(\L(\L(V_n))\) with the Hilbert--Schmidt
inner product on superoperators:
\begin{equation}
  \label{eq:A-ip-def}
  \lsuper S,T \rsuper
  := \Tr_{\L(V_n)}(S^\dagger T)
  = \sum_i \ip{E_i, S^\dagger T(E_i)},
\end{equation}
where \(\{E_i\}\) is any orthonormal basis of \((\L(V_n),\ip{\cdot,\cdot})\),
and \(S^\dagger\) denotes the adjoint with respect to \eqref{eq:hs-ip}.

Since the decomposition \eqref{eq:Lv-decomp} is multiplicity-free,
Schur's lemma implies that every \(T\in\mathcal A\) acts as a scalar on each
irreducible sector \(\E_a\).  Hence
\[
  \dim \mathcal A=n+1,
\]
and \(\mathcal A\) is commutative.  Moreover, the normalized projectors
\[
  \{d_a^{-1/2}\proj_a\}_{a=0}^n
\]
form an orthonormal basis of
\((\mathcal A,\lsuper \cdot, \cdot \rsuper)\).

\subsection{Twirling operators}

For each \(0\le b\le n\), the \emph{twirling operator} associated with
\(\E_b\) is
\begin{equation}
  \label{eq:twirl-def}
  \twirl_b(X) := \sum_{E\in \B_b} E^\dagger X E,
  \qquad X\in \L(V_n),
\end{equation}
where \(\B_b\) is any orthonormal basis of \(\E_b\) with respect to
\eqref{eq:hs-ip}.  By \cite[Section~III]{usIntrinsicMacWilliams},
the definition \eqref{eq:twirl-def} is independent of the choice of
\(\B_b\), each \(\twirl_b\) is \(G\)-equivariant, and the normalized twirls
\[
  \{d_b^{-1/2}\twirl_b\}_{b=0}^n
\]
form a second orthonormal basis of 
\((\mathcal A,\lsuper \cdot, \cdot \rsuper)\)
.

\subsection{The intrinsic MacWilliams matrix}

Since \(\{\proj_a\}_{a=0}^n\) and \(\{\twirl_b\}_{b=0}^n\) are two bases of
\(\mathcal A\), there is a unique \((n+1)\times(n+1)\) matrix
\(M=(M_{ba})_{0\le b,a\le n}\) such that
\begin{equation}
  \label{eq:twirl-expansion}
  \twirl_b=\sum_{a=0}^n M_{ba}\proj_a.
\end{equation}
Restricting \eqref{eq:twirl-expansion} to the sector \(\E_a\) gives
\begin{equation}
  \label{eq:twirl-on-sector}
  \twirl_b\big|_{\E_a}=M_{ba}I_{\E_a}.
\end{equation}
Thus \(M_{ba}\) is the scalar by which \(\twirl_b\) acts on \(\E_a\).
Equivalently, the \(a\)th column of \(M\) records the joint spectrum of the
commuting family \(\{\twirl_b\}_{b=0}^n\) on \(\E_a\).  We call \(M\) the
\emph{intrinsic MacWilliams matrix}.

Let
\begin{equation}
  \label{eq:D-def}
  D:=\operatorname{diag}(d_0,d_1,\dots,d_n).
\end{equation}
Because both
\[
  \{d_a^{-1/2}\proj_a\}_{a=0}^n
  \qquad\text{and}\qquad
  \{d_b^{-1/2}\twirl_b\}_{b=0}^n
\]
are orthonormal bases of the same Hilbert space
\((\mathcal A,\lsuper \cdot, \cdot \rsuper)\), the normalized
change-of-basis matrix
\[
  D^{-1/2}MD^{1/2}
\]
is unitary.  Equivalently, \(M\) satisfies the Plancherel identity
\begin{equation}
  \label{eq:plancherel}
  M D \overline{M}^{\,T}=D,
\end{equation}
and hence the inversion formula
\begin{equation}
  \label{eq:M-inverse-setup}
  M^{-1}=D\overline{M}^{\,T}D^{-1}.
\end{equation}
The explicit formula derived in \Cref{sec:structure} shows that the entries
of \(M\) are real, so \eqref{eq:plancherel} and \eqref{eq:M-inverse-setup}
reduce to
\[
  MDM^T=D,
  \qquad
  M^{-1}=DM^TD^{-1}.
\]

\section{Eigenvalues of the Degree-One Twirl}
\label{sec:x_a}

The main result of this section is an explicit closed-form expression for
the eigenvalue of the degree-one twirling operator \(\twirl_1\) on each
irreducible sector \(\E_a \subset \L(V_n)\).  The proof compares two
quadratic Casimirs: the ordinary Casimir \(C\) acting on \(V_n\), and the
conjugation Casimir \(\mathscr C\) acting as a superoperator on \(\L(V_n)\).

\begin{lemma}[Eigenvalues of \(\twirl_1\)]
\label{lem:twirl1-eigenvalues}
For \(q \ge 2\) and \(0 \le a \le n\), the twirling operator
\(\twirl_1\) acts on the irreducible sector \(\E_a \subset \L(V_n)\) as the
scalar
\begin{equation}
  \label{eq:xa-final}
  x_a \;:=\; M_{1a}
  \;=\;
  \frac{q^2 - 1}{\binom{n+q-1}{n}}
  \left(
    1 -
    \frac{a(a+q-1)q}{n(q-1)(n+q)}
  \right).
\end{equation}
The spectral values of \(\twirl_1\) lie on an affine image of the quadratic
lattice \(\{a(a+q-1):0\le a\le n\}\), and \(a\mapsto x_a\) is strictly
decreasing.
\end{lemma}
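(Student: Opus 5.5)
The plan is to express $\twirl_1$ through the action of the Lie algebra and then read off its eigenvalues from two Casimir values. First I identify an explicit orthonormal basis of $\E_1$. Assume $n\ge1$, which the formula requires. Let $\{t_i\}_{i=1}^{q^2-1}$ be an orthonormal basis of $\mathfrak{su}(q)$, taken as Hermitian matrices with $\Tr(t_it_j)=\tfrac12\delta_{ij}$ in the defining representation. Write $T_i:=d\rho_n(t_i)\in\L(V_n)$. The map $t\mapsto d\rho_n(t)$ is injective and $G$-equivariant from the adjoint representation $(1,0,\dots,0,1)$ into $\L(V_n)$. Its image is traceless, since $\mathfrak{su}(q)$ is perfect. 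Because $\L(V_n)$ is multiplicity-free by \eqref{eq:Lv-decomp}, this image must equal $\E_1$. Schur's lemma applied to the invariant form $(t,s)\mapsto \Tr_{V_n}(d\rho_n(t)d\rho_n(s))$ gives $\Tr_{V_n}(T_iT_j)=\kappa\,\delta_{ij}$ for some constant $\kappa$. Summing over $i$ identifies it as $\kappa=N c_n/(q^2-1)$, where $N=\dim V_n$ and $c_n$ is the scalar by which $C=\sum_i T_i^2$ acts on the irreducible $V_n$. Hence $\B_1=\{\kappa^{-1/2}T_i\}$ is an orthonormal basis of $\E_1$, and by \eqref{eq:twirl-def}
\[
\twirl_1(X)=\kappa^{-1}\sum_i T_iXT_i .
\]

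Next I compare this with the conjugation Casimir $\mathscr C(X):=\sum_i [T_i,[T_i,X]]$. Expanding the double commutator and using $\sum_iT_i^2=c_nI$ gives
\[
\mathscr C(X)=2c_nX-2\sum_iT_iXT_i .
\]
Therefore
\[
\twirl_1=\kappa^{-1}\bigl(c_n I-\tfrac12\mathscr C\bigr).
\]
On $\E_a$ the superoperator $\mathscr C$ acts by the Casimir value $C_a$ of $(a,0,\dots,0,a)$, so
\[
x_a=\frac{q^2-1}{N}\Bigl(1-\frac{C_a}{2c_n}\Bigr).
\]
This already shows that $x_a$ is affine in $C_a$.

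The remaining step, and the only real computation, is to evaluate the Casimir values via Freudenthal's formula $\langle\lambda,\lambda+2\rho\rangle$, in the normalization where the fundamental representation has value $(q^2-1)/(2q)$. For $V_n$ this gives $c_n=\frac{n(n+q)(q-1)}{2q}$. For $(a,0,\dots,0,a)$, whose highest weight is $a(\varepsilon_1-\varepsilon_q)$, it gives $C_a=a(a+q-1)$; as a check, $a=1$ gives the adjoint value $q$. Only the ratio $C_a/c_n$ enters, so the choice of normalization is harmless. Substituting yields
\[
\frac{C_a}{2c_n}=\frac{a(a+q-1)\,q}{n(q-1)(n+q)},
\]
which is exactly \eqref{eq:xa-final}.

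Finally, the coefficient of $a(a+q-1)$ in $x_a$ is $-\frac{(q^2-1)q}{N n(q-1)(n+q)}<0$. Since $a\mapsto a(a+q-1)$ is strictly increasing for $a\ge0$ and $q\ge2$, the map $a\mapsto x_a$ is strictly decreasing, and its values lie on an affine image of the quadratic lattice $\{a(a+q-1):0\le a\le n\}$. I expect the main care to be needed in two places: identifying $\E_1$ with $d\rho_n(\mathfrak{su}(q))$, and keeping the normalizations of $\kappa$, $c_n$ and $C_a$ consistent with one another.
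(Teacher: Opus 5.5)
Your proposal is correct and follows essentially the same route as the paper: you identify $\E_1=d\rho_n(\mathfrak{su}(q))$ to write $\twirl_1=\kappa^{-1}\sum_i T_i(\cdot)T_i$, then expand the conjugation Casimir to get $\twirl_1=\kappa^{-1}\bigl(c_nI-\tfrac12\mathscr C\bigr)$, fix $\kappa$ by tracing the ordinary Casimir, and substitute the Casimir eigenvalues. The differences are only cosmetic: you use the normalization $\Tr(t_it_j)=\tfrac12\delta_{ij}$, so your Casimir values are half of the paper's $c_V$ and $c_a=2a(a+q-1)$, which leaves the ratio unchanged; you also make explicit the assumption $n\ge 1$ and the multiplicity-free argument that $d\rho_n(\mathfrak{su}(q))$ must equal $\E_1$.
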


\begin{proof}
We use the Hermitian-generator convention for \(\mathfrak{su}(q)\): thus
\(\mathfrak{su}(q)\) denotes the real vector space of traceless Hermitian
\(q\times q\) matrices, with Lie bracket \(i[A,B]\).  Fix an orthonormal
basis
\[
  \{T_\mu\}_{\mu=1}^{q^2-1}\subset \mathfrak{su}(q)
\]
with respect to the Hilbert--Schmidt form, so $ \Tr(T_\mu T_\nu)=\delta_{\mu\nu}$. 
Let \(J_\mu=d\rho_n(T_\mu)\in \L(V_n)\) denote the corresponding Hermitian
generators in the representation of $ \mathfrak{su}(q) $ on \(V_n=\Sym^n(\mathbb C^q)\).

\medskip\noindent\textbf{Step 1: two Casimirs.}
There are two quadratic Casimirs in play.  The ordinary Casimir acts on
\(V_n\).  Since \(V_n\) is irreducible, Schur's lemma gives
\begin{equation}
  \label{eq:ordinary-casimir}
  C:=\sum_{\mu=1}^{q^2-1}J_\mu^2=c_V I_{V_n}.
\end{equation}
The conjugation Casimir is the superoperator on \(\L(V_n)\) defined by
\begin{equation}
  \label{eq:conj-casimir}
  \mathscr C(X):=\sum_{\mu=1}^{q^2-1}[J_\mu,[J_\mu,X]].
\end{equation}
It is \(G\)-equivariant for the conjugation representation.  Hence, since
each \(\E_a\) is \(G\)-irreducible, Schur's lemma gives
\[
  \mathscr C\big|_{\E_a}=c_a I_{\E_a}
\]
for some scalar \(c_a\).

\medskip\noindent\textbf{Step 2: expressing \(\twirl_1\) in terms of \(\mathscr C\).}
The image \(d\rho_n(\mathfrak{su}(q))\) is the adjoint sector
\(\E_1\subset \L(V_n)\).  Since \(\E_1\) is irreducible and the
Hilbert--Schmidt form is \(G\)-invariant, Schur's lemma gives a scalar
\(\kappa_n>0\) such that
\begin{equation}
  \label{eq:kappa-def}
  \Tr_{V_n}(J_\mu J_\nu)=\kappa_n\delta_{\mu\nu}.
\end{equation}
Equivalently, \(\kappa_n\) is the Dynkin index of \(V_n\) in this
normalization.  Therefore
\(\{J_\mu/\sqrt{\kappa_n}\}_{\mu=1}^{q^2-1}\) is an orthonormal basis of
\(\E_1\), and the definition of \(\twirl_1\) gives
\begin{equation}
  \label{eq:twirl1-J}
  \twirl_1(X)
  =
  \frac{1}{\kappa_n}
  \sum_{\mu=1}^{q^2-1}J_\mu XJ_\mu.
\end{equation}
Expanding the double commutator in \eqref{eq:conj-casimir} and using
\eqref{eq:ordinary-casimir}, we obtain
\begin{align}
  \mathscr C(X)
  &=
  \Bigl(\textstyle\sum_\mu J_\mu^2\Bigr)X
  -2\sum_\mu J_\mu XJ_\mu
  +X\Bigl(\textstyle\sum_\mu J_\mu^2\Bigr) \nonumber\\
  &=
  2c_VX-2\sum_{\mu=1}^{q^2-1}J_\mu XJ_\mu.
  \label{eq:conj-casimir-expanded}
\end{align}
Comparing \eqref{eq:conj-casimir-expanded} with \eqref{eq:twirl1-J} gives
\begin{equation}
  \label{eq:twirl1-casimir}
  \twirl_1
  =
  \frac{c_V}{\kappa_n}I_{\L(V_n)}
  -
  \frac{1}{2\kappa_n}\mathscr C,
\end{equation}
where \(I_{\L(V_n)}\) denotes the identity superoperator on \(\L(V_n)\).

\medskip\noindent\textbf{Step 3: the Dynkin index.}
Taking the trace of \eqref{eq:ordinary-casimir} on \(V_n\), and using
\eqref{eq:kappa-def}, gives
\[
  (q^2-1)\kappa_n
  =
  \sum_{\mu=1}^{q^2-1}\Tr_{V_n}(J_\mu^2)
  =
  c_V\dim V_n.
\]
Thus
\begin{equation}
  \label{eq:kappa-value}
  \kappa_n
  =
  \frac{c_V\dim V_n}{q^2-1}
  =
  \frac{c_V\dim V_n}{\dim\mathfrak{su}(q)}.
\end{equation}
Substituting this into \eqref{eq:twirl1-casimir} yields
\begin{equation}
  \label{eq:twirl1-final}
  \twirl_1
  =
  \frac{q^2-1}{\dim V_n}
  \left(
    I_{\L(V_n)}
    -
    \frac{1}{2c_V}\mathscr C
  \right).
\end{equation}

\medskip\noindent\textbf{Step 4: eigenvalues.}
Restricting \eqref{eq:twirl1-final} to \(X\in \E_a\) and using
\(\mathscr C|_{\E_a}=c_aI_{\E_a}\) gives
\begin{equation}
  \label{eq:xa-casimir}
  x_a
  =
  \frac{q^2-1}{\dim V_n}
  \left(1-\frac{c_a}{2c_V}\right)
  =
  \frac{\dim\mathfrak{su}(q)}{\dim V_n}
  \left(1-\frac{c_a}{2c_V}\right).
\end{equation}
By Appendix~\ref{app:casimir}, in the normalization
\(\Tr_{\mathbb C^q}(T_\mu T_\nu)=\delta_{\mu\nu}\), the relevant Casimir
eigenvalues are
\begin{equation}
  \label{eq:cV-ca}
  c_V=\frac{n(q-1)(n+q)}{q},
  \qquad
  c_a=2a(a+q-1).
\end{equation}
Therefore
\[
  \frac{c_a}{2c_V}
  =
  \frac{q\,a(a+q-1)}{n(q-1)(n+q)}.
\]
Using \(\dim V_n=\binom{n+q-1}{n}\), this gives \eqref{eq:xa-final}.

Finally, \(a\mapsto a(a+q-1)\) is strictly increasing on
\(\{0,1,\dots,n\}\) for \(q\ge 2\), and its coefficient in
\eqref{eq:xa-final} is strictly negative.  Hence \(a\mapsto x_a\) is
strictly decreasing.
\end{proof}

Setting \(y_a:=a(a+q-1)\), the eigenvalues take the affine form
\[
  x_a=A-By_a,
\]
where
\[
  A=\frac{q^2-1}{\dim V_n},
  \qquad
  B=\frac{(q^2-1)q}{\dim V_n\,n(q-1)(n+q)}.
\]
This affine-quadratic-lattice structure reappears in
\Cref{sec:spectral-poly} as the spectral grid of the Racah polynomial
system.

The case \(a=0\) gives a useful check.  Since \(\E_0=\mathbb C I_{V_n}\),
we have \(\twirl_1(I_{V_n})=x_0I_{V_n}\).  Taking traces gives
\[
  x_0\dim V_n
  =
  \Tr(\twirl_1(I_{V_n}))
  =
  \sum_{E\in \B_1}\Tr(E^\dagger E)
  =
  \dim \E_1
  =
  q^2-1,
\]
so
\[
  x_0=\frac{q^2-1}{\dim V_n},
\]
in agreement with \eqref{eq:xa-final}.

For \(q=2\), \(\dim V_n=n+1\), and \eqref{eq:xa-final} reduces to
\[
  x_a
  =
  \frac{3}{n+1}
  \left(
    1-\frac{2a(a+1)}{n(n+2)}
  \right),
  \qquad a=0,1,\dots,n.
\]
This agrees with the \(b=1\) row of the \(\SU(2)\) intrinsic MacWilliams
matrix in \cite{usIntrinsicMacWilliams}.

\section{The Spectral Model}
\label{sec:spectral-poly}

The distinct eigenvalues of $\twirl_1$ established in \Cref{sec:x_a}
allow us to identify the intertwiner algebra $\mathcal{A}$ with a polynomial
algebra and to realize each twirling operator $\twirl_b$ as a polynomial in
$\twirl_1$ evaluated on a finite spectral grid.
The resulting evaluation matrix is precisely the MacWilliams matrix $M$,
and the Plancherel identity forces the row polynomials $\{p_b\}$ to be
orthogonal with respect to a natural discrete measure on the grid.
We establish here that these polynomials have degree at most $n$; the
sharper statement $\deg p_b = b$, which elevates $\{p_b\}$ to a genuine
orthogonal polynomial system, is a consequence of the tridiagonality result
proved in \Cref{sec:tridiagonal-ops}.

\subsection{Joint diagonalization and the spectral algebra isomorphism}
\label{sec:x_a-joint-diag}

By multiplicity-freeness and Schur's lemma, every $T\in\mathcal{A}$ acts as
a scalar on each irreducible sector $\E_a$:
\[
  T\big|_{\E_a} = \lambda_a(T)\,I_{\E_a}. \numberthis
\]
Together with $\sum_{a=0}^n \proj_a = I_{\L(V_n)}$, this gives the
spectral expansion
\begin{equation}
  \label{eq:T-diagonal-spectral}
  T = \sum_{a=0}^n \lambda_a(T)\,\proj_a, \qquad T \in \mathcal{A}.
\end{equation}
\begin{lemma}
  \label{lem:lambda-iso}
  The map
  \begin{equation}
    \label{eq:lambda-map-spectral}
    \lambda : \mathcal{A} \longrightarrow \mathbb{C}^{n+1},
    \qquad T \longmapsto \bigl(\lambda_0(T),\dots,\lambda_n(T)\bigr),
  \end{equation}
  is an isomorphism of algebras, where $\mathbb{C}^{n+1}$ carries
  coordinatewise multiplication.
\end{lemma}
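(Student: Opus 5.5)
I will verify the algebra-isomorphism property directly from the spectral expansion \eqref{eq:T-diagonal-spectral} and the projector relations \eqref{eq:projector-relations}, in three steps: \(\lambda\) is an algebra homomorphism, it is injective, and it is surjective.

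\textbf{Step 1 (homomorphism).} Linearity of \(\lambda\) holds because restriction to \(\E_a\) is linear. If \(S,T\in\mathcal A\), then \(T\) maps \(\E_a\) into itself, acting as \(\lambda_a(T)\), and \(S\) then acts as \(\lambda_a(S)\). Hence
\[
(ST)\big|_{\E_a}=\lambda_a(S)\lambda_a(T)\,I_{\E_a},
\]
so \(\lambda(ST)=\lambda(S)\lambda(T)\) coordinatewise. Equivalently, one can multiply the two expansions \eqref{eq:T-diagonal-spectral} and use \(\proj_a\proj_b=\delta_{ab}\proj_a\). The identity superoperator satisfies \(I_{\L(V_n)}=\sum_a\proj_a\), so \(\lambda(I_{\L(V_n)})=(1,\dots,1)\), which is the unit of \(\mathbb C^{n+1}\).

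\textbf{Step 2 (injectivity).} If \(\lambda(T)=0\), then \eqref{eq:T-diagonal-spectral} gives \(T=0\) immediately.

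\textbf{Step 3 (surjectivity).} Each \(\proj_a\) lies in \(\mathcal A\). The reason is that \(\E_a\) is a \(G\)-invariant subspace and the Hilbert--Schmidt form is \(G\)-invariant, so its orthogonal complement \(\bigoplus_{b\neq a}\E_b\) is also invariant; hence the orthogonal projector commutes with the conjugation action. We have \(\lambda(\proj_a)=e_a\), the \(a\)th standard basis vector, because \(\proj_a\) is the identity on \(\E_a\) and zero on \(\E_b\) for \(b\ne a\). The \(e_a\) span \(\mathbb C^{n+1}\), so \(\lambda\) is onto. This is consistent with \(\dim\mathcal A=n+1\) from \Cref{sec:setup}.

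None of these steps is a genuine obstacle. The only point needing care is that the \(\proj_a\) are \(G\)-equivariant and that the sectors \(\E_a\) are pairwise inequivalent, so that Schur's lemma applies sector by sector. Both facts follow from the multiplicity-free decomposition \eqref{eq:Lv-decomp}: the Dynkin labels \((a,0,\dots,0,a)\) are distinct for distinct \(a\).
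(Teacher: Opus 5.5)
Your proof is correct and matches the paper's: you get multiplicativity by restricting $S\circ T$ to each sector $\E_a$, and injectivity from the spectral expansion \eqref{eq:T-diagonal-spectral}. The only difference is surjectivity: the paper counts dimensions using $\dim\mathcal A=n+1$, whereas you exhibit preimages $\proj_a\in\mathcal A$ with $\lambda(\proj_a)=e_a$. That step is slightly more self-contained, since the paper's dimension count itself tacitly requires the $\proj_a$ to be equivariant.
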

\begin{proof}
  Linearity and bijectivity are clear: $\lambda$ is $\mathbb{C}$-linear,
  injective by~\eqref{eq:T-diagonal-spectral} (if all $\lambda_a(T) = 0$
  then $T = 0$), and hence bijective since $\dim\mathcal{A} = n+1$.
  For multiplicativity, let $S, T \in \mathcal{A}$.
  Since $T|_{\E_a} = \lambda_a(T)\,I_{\E_a}$ and $S$ preserves $\E_a$,
  \[
    (S \circ T)\big|_{\E_a}
    = S\bigl(\lambda_a(T)\,I_{\E_a}\bigr)
    = \lambda_a(T)\,\lambda_a(S)\,I_{\E_a}, \numberthis
  \]
  so $\lambda_a(S \circ T) = \lambda_a(S)\,\lambda_a(T)$, as required.
\end{proof}

For the twirling operators, the MacWilliams coefficients are precisely
these joint eigenvalues:
\begin{equation}
  \label{eq:M-as-eigenvalues-spectral}
  M_{ba} = \lambda_a(\twirl_b).
\end{equation}
Thus the $b$th row of $M$ records the eigenvalues of $\twirl_b$ on the
sectors $\E_a$.

\subsection{The spectral grid of \texorpdfstring{$\twirl_1$}{twirl\_1}}
\label{sec:spectral-grid}

From \Cref{sec:x_a}, the degree-one twirl has spectral decomposition
\begin{equation}
  \label{eq:twirl1-spectral}
  \twirl_1 = \sum_{a=0}^n x_a\,\proj_a,
\end{equation}
where
\begin{equation}
  \label{eq:x-a-spectral}
  x_a
  =
  \frac{q^2-1}{\binom{n+q-1}{n}}
  \!\left(
    1-\frac{q\,a(a+q-1)}{n(q-1)(n+q)}
  \right).
\end{equation}
We call $\{x_0,\dots,x_n\}$ the \emph{spectral grid} of $\twirl_1$.

\begin{lemma}
  \label{lem:x-distinct}
  The scalars $x_0,x_1,\dots,x_n$ are pairwise distinct.
\end{lemma}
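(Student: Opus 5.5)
This follows directly from \Cref{lem:twirl1-eigenvalues}, which already asserts that \(a\mapsto x_a\) is strictly decreasing on \(\{0,1,\dots,n\}\). A strictly monotone function on a finite index set is injective, so \(x_a\neq x_{a'}\) whenever \(a\neq a'\). The plan is to make this explicit by recalling the affine form \(x_a=A-By_a\) with \(y_a=a(a+q-1)\), and then checking two facts: \(B\neq 0\), and \(y_a\) is injective in \(a\).

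\textbf{Step 1: \(B>0\).} By the formula after \Cref{lem:twirl1-eigenvalues},
\[
B=\frac{(q^2-1)q}{\binom{n+q-1}{n}\,n(q-1)(n+q)}.
\]
For \(q\ge 2\) and \(n\ge 1\), every factor in the numerator and denominator is a positive integer, so \(B>0\). The case \(n=0\) is vacuous, since then there is a single grid point \(x_0\). It is excluded anyway, because the formula involves division by \(n\).

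\textbf{Step 2: injectivity of \(y_a\).} For \(a>a'\ge 0\) we factor
\[
y_a-y_{a'}=(a-a')(a+a'+q-1).
\]
Both factors are strictly positive when \(q\ge 2\), so \(y_a>y_{a'}\).

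\textbf{Step 3: conclusion.} Combining the two steps gives
\[
x_{a'}-x_a=B\,(y_a-y_{a'})>0 \quad\text{for } a>a',
\]
so the \(x_a\) are pairwise distinct, indeed strictly decreasing.

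There is no real obstacle. The only point needing care is the standing hypotheses: \(n\ge 1\) and \(q\ge 2\) are required for \(B\) to be defined and positive. As a sanity check that the sign is consistent, \(x_0=(q^2-1)/\dim V_n>0\), and \(x_n\) is the smallest value on the grid.
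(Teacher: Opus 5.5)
Your proof is correct and follows the paper's argument: write $x_a = A - B\,y_a$ with $y_a = a(a+q-1)$ strictly increasing and $B>0$, so $a \mapsto x_a$ is strictly decreasing and hence injective. The only cosmetic difference is that you get the monotonicity of $y_a$ from the factorization $y_a - y_{a'} = (a-a')(a+a'+q-1)$, whereas the paper uses the consecutive difference $y_{a+1}-y_a = 2a+q$.
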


\begin{proof}
  Set $y_a := a(a+q-1)$.  Since for \(a \ge 0,\; q \ge 2\) 
  \[
    y_{a+1} - y_a = (a+1)(a+q) - a(a+q-1) = 2a+q > 0
    \numberthis
  \]
   the sequence $(y_a)$ is strictly increasing.  The coefficient of $y_a$
  in~\eqref{eq:x-a-spectral} is
  $-q(q^2-1)/\bigl[\binom{n+q-1}{n}n(q-1)(n+q)\bigr] < 0$,
  so $a \mapsto x_a$ is strictly decreasing, giving pairwise distinct values.
\end{proof}

Since $\twirl_1$ has $n+1$ distinct eigenvalues $x_0,\dots,x_n$ as a
superoperator on $\L(V_n)$, its minimal polynomial is
\begin{equation}
  \label{eq:min-poly-twirl1}
  m(x) = \prod_{a=0}^n (x - x_a).
\end{equation}
Consequently every polynomial in $\twirl_1$ has a unique representative of
degree at most $n$, and $\mathbb{C}[\twirl_1] \cong \mathbb{C}[x]/(m(x))$
as algebras.

\subsection{Polynomial functional calculus}
\label{sec:poly-calc}

\begin{lemma}
  \label{prop:twirl1-generates}
  The following hold for all $b, a \in \{0,\dots,n\}$:
  \begin{enumerate}[(i)]
    \item\label{item:lagrange}
      The Lagrange interpolant
      \begin{equation}
        \label{eq:lagrange-def}
        \ell_a(x) := \prod_{\substack{0 \le c \le n \\ c \ne a}}
        \frac{x - x_c}{x_a - x_c}
      \end{equation}
      satisfies $\ell_a(\twirl_1) = \proj_a$.
    \item\label{item:poly-rep}
      There is a unique polynomial $p_b \in \mathbb{C}[x]$ of degree
      at most $n$ such that
      \begin{equation}
        \label{eq:twirl-poly-realization}
        \twirl_b = p_b(\twirl_1),
      \end{equation}
      and this polynomial evaluates as
      \begin{equation}
        \label{eq:pb-values}
        p_b(x_a) = M_{ba} \qquad (a = 0,1,\dots,n).
      \end{equation}
  \end{enumerate}
\end{lemma}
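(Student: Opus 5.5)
The whole argument will run through the spectral algebra isomorphism $\lambda$ of \Cref{lem:lambda-iso}, together with the spectral decomposition \eqref{eq:twirl1-spectral} of $\twirl_1$. Because $\lambda$ is multiplicative and unital (note $\lambda(I_{\L(V_n)})=(1,\dots,1)$ since $\sum_a\proj_a=I$), I will first show that for every polynomial $f\in\mathbb{C}[x]$,
\[
  \lambda_a\bigl(f(\twirl_1)\bigr)=f(x_a),
  \qquad\text{that is,}\qquad
  f(\twirl_1)=\sum_{a=0}^n f(x_a)\,\proj_a .
\]
This follows by induction on monomials: $\lambda_a(\twirl_1^k)=x_a^k$, then extend linearly. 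Equivalently, one can expand $\twirl_1^k=\sum_a x_a^k\proj_a$ directly using the orthogonality relations \eqref{eq:projector-relations}.

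For part (i), I will apply this identity with $f=\ell_a$. By construction the Lagrange interpolant satisfies $\ell_a(x_c)=\delta_{ac}$. This requires the denominators $x_a-x_c$ to be nonzero, which is exactly where \Cref{lem:x-distinct} is used. It then gives $\ell_a(\twirl_1)=\sum_c\delta_{ac}\proj_c=\proj_a$.

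For part (ii), existence comes from expanding $\twirl_b$ via \eqref{eq:twirl-expansion} and substituting (i):
\[
  \twirl_b=\sum_a M_{ba}\proj_a=\sum_a M_{ba}\,\ell_a(\twirl_1).
\]
So I set $p_b:=\sum_a M_{ba}\ell_a$. This polynomial has degree at most $n$ because each $\ell_a$ does. The evaluation $p_b(x_a)=M_{ba}$ is immediate from $\ell_c(x_a)=\delta_{ca}$, or alternatively by applying $\lambda_a$ to $\twirl_b=p_b(\twirl_1)$ and invoking \eqref{eq:M-as-eigenvalues-spectral}.

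For uniqueness, suppose $p,\tilde p$ both have degree at most $n$ and satisfy $p(\twirl_1)=\tilde p(\twirl_1)=\twirl_b$. Applying $\lambda$ shows that $p-\tilde p$ vanishes at the $n+1$ distinct points $x_0,\dots,x_n$. Since its degree is at most $n$, it must be the zero polynomial. Equivalently, $p-\tilde p$ is divisible by the minimal polynomial $m$ of degree $n+1$ from \eqref{eq:min-poly-twirl1}.

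None of these steps presents a real obstacle. The only point needing care is the one place distinctness of the grid (\Cref{lem:x-distinct}) is essential: it makes the $\ell_a$ well defined and gives uniqueness.
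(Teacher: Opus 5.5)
Your proposal is correct and takes essentially the same approach as the paper: the paper restricts $p(\twirl_1)$ to each sector $\E_c$ to get $p(x_c)I_{\E_c}$, which is exactly your identity $\lambda_c\bigl(f(\twirl_1)\bigr)=f(x_c)$ phrased through the isomorphism of \Cref{lem:lambda-iso}. Existence via $p_b=\sum_a M_{ba}\ell_a$, the evaluation $p_b(x_a)=M_{ba}$, and uniqueness from the distinctness of the grid (\Cref{lem:x-distinct}) are then argued just as you do.
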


\begin{proof}
  \textit{Closure under polynomial calculus.}
  Since $\twirl_1$ is $G$-equivariant and $\mathcal{A} =
  \mathrm{Hom}_G(\L(V_n), \L(V_n))$ is closed under composition,
  $\twirl_1^k \in \mathcal{A}$ for all $k \ge 0$, and hence
  $p(\twirl_1) \in \mathcal{A}$ for every polynomial $p$.

  For any $X \in \E_c$ and any polynomial $p$,
  \[
    p(\twirl_1)(X)
    = p\bigl(x_c\bigr)\,X, \numberthis
  \]
  since $\twirl_1|_{\E_c} = x_c I_{\E_c}$ and the identity extends to
  polynomials term by term.  Applying this with $p = \ell_a$:
  \[
    \ell_a(\twirl_1)\big|_{\E_c}
    = \ell_a(x_c)\,I_{\E_c} 
    = \delta_{ac}\,I_{\E_c}. \numberthis
  \]
  Thus $\ell_a(\twirl_1)$ acts as the identity on $\E_a$ and annihilates
  every other sector, so $\ell_a(\twirl_1) = \proj_a$.

  Using part~\ref{item:lagrange} and~\eqref{eq:T-diagonal-spectral},
  \[
    \twirl_b
    = \sum_{a=0}^n M_{ba}\,\proj_a
    = \sum_{a=0}^n M_{ba}\,\ell_a(\twirl_1)
    = p_b(\twirl_1), \numberthis
  \]
  where $p_b(x) := \sum_{a=0}^n M_{ba}\,\ell_a(x)$ has degree at most $n$.

This also proves uniqueness.  Indeed, if \(p(\twirl_1)=0\) with
\(\deg p\le n\), then restricting to \(\E_a\) gives
\(p(x_a)=0\) for every \(a=0,\dots,n\).  Since the \(x_a\)'s are distinct,
a polynomial of degree at most \(n\) with these \(n+1\) roots must vanish
identically.

  Restricting $\twirl_b = p_b(\twirl_1)$ to $\E_a$ and
  using~\eqref{eq:twirl-on-sector}:
  \[
    M_{ba}\,I_{\E_a}
    = \twirl_b\big|_{\E_a}
    = p_b(x_a)\,I_{\E_a}, \numberthis
  \]
  which gives $p_b(x_a) = M_{ba}$.
\end{proof}

By the Plancherel identity \eqref{eq:plancherel} and the evaluation formula
\(M_{ba}=p_b(x_a)\), the row polynomials satisfy
\begin{equation}
\label{eq:poly-orthogonality}
\sum_{a=0}^n d_a\,\overline{p_b(x_a)}\,p_c(x_a)
=
d_b\,\delta_{bc}.
\end{equation}
Thus \(\{p_b\}\) is an orthogonal family on the spectral grid
\(\{x_a\}_{a=0}^n\) with weights \(d_a\).

\section{Tridiagonality of Multiplication by \texorpdfstring{$\twirl_1$}{T\_1}}
\label{sec:suq-support}

The goal of this section is to prove that multiplication by the degree-one
twirl is tridiagonal in the twirl basis:
\[
  \twirl_1\twirl_b
  \in
  \mathrm{span}\{\twirl_{b-1},\twirl_b,\twirl_{b+1}\},
  \qquad b=0,1,\dots,n.
\]
The proof has two ingredients.  First, a Pieri-rule calculation shows that
the only diagonal sectors \(\E_r=(r,0,\dots,0,r)\) that can occur in
\(\E_b\otimes\Ad\) have \(r\in\{b-1,b,b+1\}\).  Second, the
\(G\)-equivariance of operator multiplication,
\[
  m:\E_b\otimes\E_c\to\L(V_n),\qquad m(E\otimes F)=EF,
\]
implies that the expansion of \(\twirl_c\twirl_b\) in the twirl basis can
involve only those \(\twirl_r\) for which \(\E_r\) occurs in
\(\E_b\otimes\E_c\).  Applying this with \(c=1\) and
\(\E_1\cong\Ad\) gives the result.

\subsection{Pieri rules and the restricted support lemma}
\label{sec:pieri-support}

Let
\begin{align}
   \F &:= (1,0,\dots,0), \\
   \F^* &:= (0,\dots,0,1), \\
  \Ad &:= (1,0,\dots,0,1) 
\end{align}
denote the defining representation, its dual, and the adjoint representation
of $G = \SU(q)$, respectively. Throughout this subsection, a Dynkin-label tuple
\((a_1,\dots,a_{q-1})\) denotes the irreducible \(\SU(q)\)-representation
with highest weight \(a_1\omega_1+\cdots+a_{q-1}\omega_{q-1}\). We use the Pieri rules, equivalently the special case of the
Littlewood--Richardson rule, for tensoring with the defining representation
or its dual \cite{fultonharris}.
 For an irreducible highest-weight module
$(a_1,\dots,a_{q-1})$ with $a_i \in \mathbb{Z}_{\ge 0}$
\begin{align}
  \label{eq:suq-pieri-fund}
  (a_1,\dots,a_{q-1}) \otimes F
  \;\cong\;&
  (a_1+1,\,a_2,\dots,a_{q-1})
  \nonumber\\
  &\oplus\,
  (a_1-1,\,a_2+1,\,a_3,\dots,a_{q-1})
  \nonumber\\
  &\oplus\cdots\oplus\,
  (a_1,\dots,a_{q-2},\,a_{q-1}-1)
\end{align}
where any summand with a negative entry is omitted.
The dual rule is obtained from~\eqref{eq:suq-pieri-fund} by the involution
of the Dynkin diagram $A_{q-1}$ that reverses node order, mapping
$\omega_k \mapsto \omega_{q-k}$ and in particular $\F = \omega_1$ to
$\F^* = \omega_{q-1}$:
\begin{align}
  \label{eq:suq-pieri-dual}
  (a_1,\dots,a_{q-1}) \otimes \F^*
  \;\cong\;&
  (a_1,\dots,a_{q-2},\,a_{q-1}+1)
  \nonumber\\
  &\oplus\,
  (a_1,\dots,a_{q-3},\,a_{q-2}+1,\,a_{q-1}-1)
  \nonumber\\
  &\oplus\cdots\oplus\,
  (a_1-1,\,a_2,\dots,a_{q-1}),
\end{align}
where again any summand with a negative entry is omitted.

We also use the standard decomposition
\begin{equation}
  \label{eq:FFstar}
  \F \otimes \F^* \;\cong\; \mathbf{1} \oplus \Ad,
\end{equation}
which follows from $\mathbb{C}^q \otimes (\mathbb{C}^q)^* \cong
\mathrm{End}(\mathbb{C}^q) \cong \mathfrak{sl}(q) \oplus \mathbb{C}$,
the traceless part being the adjoint.

\begin{lemma}[Restricted diagonal support]
\label{lem:suq-diagonal-support}
Let $q \ge 2$ and $b, r \ge 0$.  If
$\Hom_{\SU(q)}\!\bigl(\E_r,\,\E_b \otimes \Ad\bigr) \ne 0$,
then $r \in \{b-1, b, b+1\}$, omitting $b-1$ when $b = 0$.
\end{lemma}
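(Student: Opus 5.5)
Using \eqref{eq:FFstar}, I would realize \(\E_b\otimes\Ad\) as a direct summand of \(\E_b\otimes\F\otimes\F^*\). It then suffices to compute which diagonal labels \((r,0,\dots,0,r)\) can occur in \((b,0,\dots,0,b)\otimes\F\otimes\F^*\), and to separately discard any spurious candidates coming from the trivial summand \(\E_b\otimes\mathbf 1=\E_b\). That trivial summand only contributes \(r=b\), which is already allowed. So a candidate list for the full triple product that lies in \(\{b-1,b,b+1\}\) proves the lemma.

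\textbf{Step 1: tensor with \(\F\).} Apply \eqref{eq:suq-pieri-fund} to \((b,0,\dots,0,b)\). The rule adds \(1\) at slot \(k\) and subtracts \(1\) at slot \(k-1\). Since only the first and last entries are nonzero, only three summands survive for \(q\ge3\):
\[
(b+1,0,\dots,0,b),\qquad (b-1,1,0,\dots,0,b),\qquad (b,0,\dots,0,b-1).
\]
The second and third require \(b\ge1\). For \(q=3\) the middle label reads \((b-1,b+1)\). For \(q=2\) the labels collapse to \((2b)\otimes\F=(2b+1)\oplus(2b-1)\), and this case is handled by Clebsch--Gordan directly.

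\textbf{Step 2: tensor with \(\F^*\).} Apply \eqref{eq:suq-pieri-dual} to each of the three labels. In \eqref{eq:suq-pieri-dual}, slot \(k\) gains \(1\) and slot \(k+1\) loses \(1\) (interpreted at the ends). We then retain only the outputs of diagonal shape \((r,0,\dots,0,r)\).

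The check is mechanical. From \((b+1,0,\dots,0,b)\), the diagonal outputs are \((b+1,0,\dots,0,b+1)\) and \((b,0,\dots,0,b)\). From \((b,0,\dots,0,b-1)\), they are \((b,0,\dots,0,b)\) and \((b-1,0,\dots,0,b-1)\). From \((b-1,1,0,\dots,0,b)\), the only way to clear the \(1\) in slot \(2\) is the move raising slot \(1\) and lowering slot \(2\), giving \((b,0,\dots,0,b)\).

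In every case the first and last entries change by at most \(1\) in a correlated way, so \(r\in\{b-1,b,b+1\}\), and \(r=b-1\) requires \(b\ge1\). For \(b=0\) the product is \(\F\otimes\F^*=\mathbf 1\oplus\Ad\), consistent with \(r\in\{0,1\}\).

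\textbf{Step 3: conclude.} Since \(\E_b\otimes\Ad\) is a subrepresentation of \(\E_b\otimes\F\otimes\F^*\), any \(\E_r\) appearing in it appears in the triple product. Complete reducibility makes this inheritance of nonzero \(\Hom\) spaces legitimate.

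The main obstacle is the bookkeeping in Step 2 for small \(q\). The cases \(q=2\) and \(q=3\) need checking because there the first and last slots coincide or are adjacent, so the "interior" entries used in the generic argument do not exist. I would treat \(q=2\) via \(\SU(2)\) Clebsch--Gordan, \((2b)\otimes(2)=(2b+2)\oplus(2b)\oplus(2b-2)\). For \(q=3\) I would list the outputs explicitly, and a more invariant alternative is a weight argument.

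That weight argument runs as follows. A highest weight of \(\E_b\otimes\Ad\) is \(\lambda+\mu\) with \(\mu\) a weight of \(\Ad\). Pairing with the fundamental coweight dual to \(\omega_1\) changes the first Dynkin coordinate, measured in the \(\epsilon\)-basis, by at most one unit. This bounds \(|r-b|\le1\) uniformly in \(q\), and it can serve as a cross-check.
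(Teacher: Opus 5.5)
Your proposal is correct and follows the paper's proof essentially step for step: you embed $\E_b\otimes\Ad$ in $\E_b\otimes\F\otimes\F^*$ via $\F\otimes\F^*\cong\mathbf 1\oplus\Ad$, use the Pieri rule to get the three summands $(b+1,0,\dots,0,b)$, $(b-1,1,0,\dots,0,b)$, $(b,0,\dots,0,b-1)$, apply the dual rule keeping only diagonal outputs, and treat $q=2$ by Clebsch--Gordan. Your extra observation that for $q=3$ the middle summand is really $(b-1,b+1)$, which still contributes only $(b,b)$, settles a notational point the paper glosses over. Your weight cross-check is also a valid uniform alternative once it is phrased as pairing with the coweight dual to $\alpha_1$ (the first sum-zero $\varepsilon$-coordinate), since every weight of $\Ad$ has $\alpha_1$-coefficient in $\{-1,0,1\}$.
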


\begin{proof}
\textit{Case $q = 2$.}
For $\SU(2)$ we are in a degenerate case where $ \E_b \cong (2b)$ and 
$\Ad \cong (2)$, the claim reduces to the standard
Clebsch--Gordan decomposition
\[
  (2b) \otimes (2) \cong (2b+2) \oplus (2b) \oplus (2b-2), \numberthis
\]
which gives $r \in \{b-1, b, b+1\}$ immediately.

\textit{Case $q \ge 3$.}
From~\eqref{eq:FFstar},
$\E_b \otimes \F \otimes \F^* \cong \E_b \otimes (\mathbf{1} \oplus \Ad)
\cong \E_b \oplus (\E_b \otimes \Ad)$,
so every irreducible constituent of $\E_b \otimes \Ad$ also occurs in
$\E_b \otimes F \otimes F^*$.
It therefore suffices to identify which sectors $\E_r = (r,0,\dots,0,r)$
occur in $\E_b \otimes \F \otimes \F^*$.

Applying~\eqref{eq:suq-pieri-fund} to $\E_b = (b,0,\dots,0,b)$: every
term at an interior position $k$ ($2 \le k \le q{-}2$) would require
decrementing a Dynkin label that is already zero, so it is omitted.
Exactly three terms survive:
\begin{equation}
  \label{eq:Vb-times-F}
  \E_b \otimes F \;\cong\; A_b \oplus B_b \oplus C_b,
\end{equation}
where $A_b := (b{+}1,0,\dots,0,b)$, $B_b := (b{-}1,1,0,\dots,0,b)$,
$C_b := (b,0,\dots,0,b{-}1)$, and $B_b$ is omitted when $b = 0$.
We now apply~\eqref{eq:suq-pieri-dual} to each summand and retain only
constituents $(r_1,0,\dots,0,r_{q-1})$ with $r_1 = r_{q-1}$ and all
interior labels zero, i.e.\ those of the form $\E_r$.

\smallskip
\noindent\emph{From $A_b = (b{+}1, 0,\dots,0, b)$.}
Incrementing the last label gives $\E_{b+1}$; decrementing the first
gives $\E_b$.  All intermediate terms acquire a nonzero interior label.
Contribution: $\{\E_b, \E_{b+1}\}$.

\smallskip
\noindent\emph{From $B_b = (b{-}1, 1, 0,\dots,0, b)$.}
The unique term that clears the interior label increments position $1$
and decrements position $2$:
\[
  \bigl((b{-}1){+}1,\; 1{-}1,\; 0,\dots,0,\; b\bigr)
  = (b, 0,\dots,0, b) = \E_b.
\]
All other terms retain a nonzero label or have a negative entry.
Contribution: $\{\E_b\}$.

\smallskip
\noindent\emph{From $C_b = (b, 0,\dots,0, b{-}1)$.}
Incrementing the last label gives $\E_b$; decrementing the first gives
$\E_{b-1}$.  All intermediate terms acquire a nonzero interior label.
Contribution: $\{\E_{b-1}, \E_b\}$.

\smallskip
Combining all three cases, every $\E_r$ occurring in
$\E_b \otimes \F \otimes \F^*$ satisfies $r \in \{b{-}1, b, b{+}1\}$.
\end{proof}
  
\subsection{Tridiagonality}
\label{sec:tridiagonality-proof}

We now pass from the representation-theoretic support statement to the
twirl-basis product \(\twirl_1\twirl_b\).

\begin{lemma}[Twirl-product support]
\label{lem:twirl-product-support}
Let \(0\le b,c\le n\), and write
\[
  \twirl_c\twirl_b=\sum_{r=0}^n N_{cb}^r\twirl_r.
\]
If $ \Hom_G(\E_r,\E_b\otimes\E_c)=0,$ then \(N_{cb}^r=0\).
\end{lemma}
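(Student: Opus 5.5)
The plan is to rewrite \(\twirl_c\twirl_b\) as a single "Kraus-type" superoperator whose Kraus operators are the products \(EF\). I will then show that these products lie only in the sectors \(\E_r\) allowed by \(\Hom_G(\E_r,\E_b\otimes\E_c)\ne0\), and finally read off the twirl coefficients from a \(G\)-invariant positive operator supported on those sectors.

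\emph{Step 1 (composition).} For orthonormal bases \(\B_b,\B_c\), compute
\[
  \twirl_c\twirl_b(X)=\sum_{F\in\B_c}\sum_{E\in\B_b}F^\dagger E^\dagger XEF
  =\sum_{E,F}(EF)^\dagger X(EF).
\]
Introduce the linear map \(\Theta:\L(\L(V_n))\to\L(\L(V_n))\) determined by \(\Theta(|A\rangle\langle B|)(X)=B^\dagger XA\), where \(|A\rangle\langle B|\) is the rank-one operator \(Y\mapsto\ip{B,Y}A\) on \(\L(V_n)\). This map is well defined and linear because it is sesquilinear in \((A,B)\) in the correct way. With it, \(\twirl_c\twirl_b=\Theta(Q)\) for \(Q:=\sum_{E,F}|EF\rangle\langle EF|\). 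By the same construction \(\twirl_r=\Theta(\proj_r)\), since \(\proj_r=\sum_{G\in\B_r}|G\rangle\langle G|\).

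\emph{Step 2 (support of \(Q\)).} Let \(\Phi=m|_{\E_b\otimes\E_c}:\E_b\otimes\E_c\to\L(V_n)\), which is \(G\)-equivariant. Then \(Q=\Phi\Phi^\dagger\), where the adjoint is taken with respect to the product Hilbert--Schmidt inner product, and \(\{E\otimes F\}\) is an orthonormal basis of \(\E_b\otimes\E_c\). Hence \(Q\) commutes with the conjugation action. The composite \(\proj_r\Phi\) is an equivariant map \(\E_b\otimes\E_c\to\E_r\). By complete reducibility, \(\Hom_G(\E_b\otimes\E_c,\E_r)\) has the same dimension as \(\Hom_G(\E_r,\E_b\otimes\E_c)\). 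So if the latter vanishes, \(\proj_r\Phi=0\), and therefore \(\proj_rQ=Q\proj_r=0\).

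\emph{Step 3 (conclusion).} \(Q\) lies in \(\mathcal A\), and multiplicity-freeness of \eqref{eq:Lv-decomp} with Schur's lemma gives \(Q=\sum_r\mu_r\proj_r\). Step 2 forces \(\mu_r=0\) for every forbidden \(r\). Applying \(\Theta\) yields \(\twirl_c\twirl_b=\sum_r\mu_r\twirl_r\). Since the \(\twirl_r\) form a basis of \(\mathcal A\), uniqueness of coefficients gives \(N^r_{cb}=\mu_r\), which is \(0\) for forbidden \(r\).

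The main point needing care is Step 1: checking that \(\Theta\) is a well-defined linear map, with the right conjugate-linearity in the bra slot. This is what allows the rewriting in terms of \(Q\) to be independent of the chosen bases and makes it compatible with the identification \(\twirl_r=\Theta(\proj_r)\).
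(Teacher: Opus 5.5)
Your proof is correct and follows the paper's route: write $\twirl_c\twirl_b$ in Kraus form with Kraus operators $EF$, then use $G$-equivariance of the multiplication map $\E_b\otimes\E_c\to\L(V_n)$ and Schur's lemma to get $\proj_r(EF)=0$. The paper simply asserts the final step, that Kraus operators with no $\E_r$-component produce no $\twirl_r$-coefficient, whereas your map $\Theta$ (with $\Theta(\proj_r)=\twirl_r$) together with $Q=\Phi\Phi^\dagger\in\mathcal A$ actually proves it, and even gives $N^r_{cb}=d_r^{-1}\sum_{E,F}\|\proj_r(EF)\|^2\ge 0$.
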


\begin{proof}
Let \(\{E_i\}_{i=1}^{d_b}\) and \(\{F_j\}_{j=1}^{d_c}\) be orthonormal
bases of \(\E_b\) and \(\E_c\), respectively.  Then
\[
  \twirl_c\twirl_b(X)
  =
  \sum_{i,j}(E_iF_j)^\dagger X(E_iF_j).
\]
The products \(E_iF_j\) are the images of the basis elements
\(E_i\otimes F_j\) under the multiplication map
\[
  m:\E_b\otimes\E_c\longrightarrow \L(V_n),
  \qquad
  m(E\otimes F)=EF.
\]
This map is \(G\)-equivariant for the conjugation action, since
\[
  g\cdot(EF)=(g\cdot E)(g\cdot F).
\]
Hence \(m(\E_b\otimes\E_c)\) can have components only in those irreducible
sectors \(\E_r\subset\L(V_n)\) whose type occurs in \(\E_b\otimes\E_c\).

If
\(\Hom_G(\E_r,\E_b\otimes\E_c)=0\), then equivariance of \(m\) implies
\[
  \proj_r(E_iF_j)=0
  \qquad\text{for all }i,j.
\]
where \(\proj_r\) is the orthogonal projector onto $ \E_r$.
Thus the products \(E_iF_j\) have no \(\E_r\)-component.  Therefore the
twirl built from these products has no \(\twirl_r\)-component in its
twirl-basis expansion, and hence \(N_{cb}^r=0\).
\end{proof}
\begin{lemma}[Tridiagonality of \(\twirl_1\)]
\label{lem:tridiagonality}
For each \(b=0,1,\dots,n\),
\[
  \twirl_1\twirl_b
  \in
  \mathrm{span}\{\twirl_{b-1},\twirl_b,\twirl_{b+1}\},
\]
with the convention \(\twirl_{-1}=\twirl_{n+1}=0\).  Equivalently, if
\[
  \twirl_1\twirl_b=\sum_{r=0}^n N_{1b}^r\twirl_r,
\]
then \(N_{1b}^r=0\) whenever \(r\notin\{b-1,b,b+1\}\).
\end{lemma}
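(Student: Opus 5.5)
The plan is to combine the two preceding lemmas directly.

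First, I apply \Cref{lem:twirl-product-support} with \(c=1\). Since \(\{\twirl_r\}_{r=0}^n\) is a basis of \(\mathcal A\) and \(\twirl_1\twirl_b\in\mathcal A\), the expansion \(\twirl_1\twirl_b=\sum_r N_{1b}^r\twirl_r\) exists and is unique. \Cref{lem:twirl-product-support} then gives \(N_{1b}^r=0\) whenever \(\Hom_G(\E_r,\E_b\otimes\E_1)=0\).

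Second, I identify \(\E_1\) with the adjoint representation \(\Ad=(1,0,\dots,0,1)\). This follows from the decomposition \eqref{eq:Lv-decomp} with \(a=1\), or from the observation in the proof of \Cref{lem:twirl1-eigenvalues} that \(\E_1=d\rho_n(\mathfrak{su}(q))\). For \(q=2\) the label is \((2)\), matching the degenerate case. Hence \(\Hom_G(\E_r,\E_b\otimes\E_1)\cong\Hom_G(\E_r,\E_b\otimes\Ad)\).

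Third, \Cref{lem:suq-diagonal-support} shows that this Hom space vanishes unless \(r\in\{b-1,b,b+1\}\). Combining the three steps gives \(N_{1b}^r=0\) for every \(r\notin\{b-1,b,b+1\}\), which is the claimed tridiagonality. The restriction \(0\le r\le n\) in the expansion takes care of the boundary cases: \(\twirl_{b+1}\) is absent when \(b=n\), and \(\twirl_{b-1}\) is absent when \(b=0\). This matches the convention \(\twirl_{-1}=\twirl_{n+1}=0\).

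I expect the main obstacle to lie not in this assembly but in the rigor of \Cref{lem:twirl-product-support}. The passage from ``the products \(E_iF_j\) have no \(\E_r\)-component'' to ``\(N_{1b}^r=0\)'' deserves care. If needed, I would justify it by computing \(N_{cb}^r\) through the inner product. By orthonormality of \(\{d_r^{-1/2}\twirl_r\}\), we have
\[
N_{cb}^r=d_r^{-1}\lsuper \twirl_r,\twirl_c\twirl_b\rsuper.
\]
Writing \(\twirl_c\twirl_b\) as \(X\mapsto\sum_{i,j}(E_iF_j)^\dagger X(E_iF_j)\), the superoperator trace pairs the Kraus-type operators \(E_iF_j\) against an orthonormal basis of \(\E_r\). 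This pairing yields a sum of terms \(|\ip{G_k,E_iF_j}|^2\) over basis elements \(G_k\) of \(\E_r\), up to normalization. Each such term vanishes when \(\proj_r(E_iF_j)=0\). Granting \Cref{lem:twirl-product-support} as stated, the tridiagonality lemma follows immediately from the three steps above.
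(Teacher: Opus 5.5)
Your proposal is correct and is exactly the paper's argument: apply \Cref{lem:twirl-product-support} with $c=1$, identify $\E_1\cong\Ad$, and invoke \Cref{lem:suq-diagonal-support}. Your optional inner-product check is also sound and tightens the paper's somewhat informal last step in \Cref{lem:twirl-product-support}: it gives $N_{cb}^r=d_r^{-1}\sum_{i,j}\|\proj_r(E_iF_j)\|^2$, which vanishes exactly when every $\proj_r(E_iF_j)=0$.
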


\begin{proof}
By \Cref{lem:twirl-product-support}, \(N_{1b}^r=0\) unless
\[
  \Hom_G(\E_r,\E_b\otimes\E_1)\ne0.
\]
Since \(\E_1\cong\Ad\), \Cref{lem:suq-diagonal-support} implies that this
can occur only for
\[
  r\in\{b-1,b,b+1\},
\]
with the convention that \(b-1\) is omitted when \(b=0\).  This proves the
claim.
\end{proof}

\section{From Tridiagonality to Orthogonal Polynomials}
\label{sec:tridiagonal-ops}

We now translate the tridiagonality of \Cref{lem:tridiagonality} into a
three-term recurrence for the polynomial representatives \(p_b\).  The key
point is that tridiagonality alone gives a recurrence on the finite spectral
grid, while nonvanishing of the forward coefficient implies
\(\deg p_b=b\).  Combined with the Plancherel orthogonality from
\eqref{eq:poly-orthogonality}, this shows that
\(\{p_b\}_{b=0}^n\) is a finite orthogonal polynomial system.
Throughout this section, write
\[
  \twirl_1\twirl_b=\sum_{r=0}^n N_{1b}^r\twirl_r.
\]
By \Cref{lem:tridiagonality}, this reduces to
\[
  \twirl_1\twirl_b
  =
  N_{1b}^{b+1}\twirl_{b+1}
  +
  N_{1b}^{b}\twirl_b
  +
  N_{1b}^{b-1}\twirl_{b-1},
\]
with \(\twirl_{-1}=\twirl_{n+1}=0\).

\subsection{Nonvanishing of the forward coefficient}
\begin{lemma}
  \label{lem:forward-nonzero}
  For every \(b=0,1,\dots,n-1\), one has \(N_{1b}^{b+1}\ne0\).
\end{lemma}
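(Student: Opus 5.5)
The idea is to express the forward coefficient \(N_{1b}^{b+1}\) as a manifestly nonnegative quantity, a sum of squared norms. Nonvanishing then reduces to exhibiting one explicit product of an element of \(\E_b\) with an element of \(\E_1\) that has a nonzero \(\E_{b+1}\)-component.

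\textbf{Step 1: a positivity formula for twirl coefficients.} For any superoperator of the form \(\Phi(X)=\sum_k K_k^\dagger XK_k\), I will use the trace identity \(\Tr_{\L(V_n)}(X\mapsto AXB)=\Tr(A)\Tr(B)\). With it I compute
\[
\lsuper \twirl_r,\Phi\rsuper=\sum_{E\in\B_r}\sum_k \Tr(EK_k^\dagger)\Tr(K_kE^\dagger)=\sum_k \|\proj_r K_k\|^2 .
\]
Apply this to \(\Phi=\twirl_1\twirl_b\), whose Kraus operators are the products \(E_iF_j\) with \(E_i\in\B_b\) and \(F_j\in\B_1\). The normalized twirls \(\{d_r^{-1/2}\twirl_r\}\) are orthonormal, so
\[
N_{1b}^{b+1}=\frac{1}{d_{b+1}}\sum_{i,j}\bigl\|\proj_{b+1}(E_iF_j)\bigr\|^2\ \ge 0 .
\]
The tensors \(E_i\otimes F_j\) span \(\E_b\otimes\E_1\). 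Hence this sum vanishes iff \(\proj_{b+1}(EF)=0\) for all \(E\in\E_b\) and \(F\in\E_1\). It therefore suffices to find a single pair \(E\in\E_b\), \(F\in\E_1\) with \(\proj_{b+1}(EF)\ne0\).

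\textbf{Step 2: explicit highest-weight vectors.} Realize \(V_n\) as homogeneous polynomials of degree \(n\) in \(x_1,\dots,x_q\). The Lie algebra then acts by the operators \(x_i\partial_j\), and the raising operators are \(x_i\partial_j\) with \(i<j\).

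Consider \(E:=x_1^b\partial_q^b\). Its commutator with \(x_i\partial_j\) for \(i<j\) vanishes, since neither \(j=1\) nor \(i=q\) is possible. Thus \(E\) is a highest-weight vector for the conjugation action, of weight \(b(\epsilon_1-\epsilon_q)\), which is the highest weight of \((b,0,\dots,0,b)\). It is nonzero on \(V_n\) for \(b\le n\), since it sends \(x_q^n\) to a nonzero multiple of \(x_1^bx_q^{n-b}\). Hence \(E\) generates a copy of \((b,0,\dots,0,b)\) inside \(\L(V_n)\). By multiplicity-freeness of \eqref{eq:Lv-decomp}, that copy is \(\E_b\), so \(E\in\E_b\). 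Similarly \(F:=x_1\partial_q\in\E_1\).

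Because \(\partial_q\) commutes with \(x_1\), we get \(EF=x_1^{b+1}\partial_q^{b+1}\). For \(b+1\le n\) this is nonzero, and it is a highest-weight vector of weight \((b+1)(\epsilon_1-\epsilon_q)\).

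\textbf{Step 3: locating the component.} Decompose \(EF=\sum_r\proj_r(EF)\). Each projection is a weight vector of the same weight \((b+1)(\epsilon_1-\epsilon_q)\).
\begin{itemize}
\item By the equivariance argument of \Cref{lem:twirl-product-support} together with \Cref{lem:suq-diagonal-support}, only \(r\in\{b-1,b,b+1\}\) can contribute.
\item All weights of \((r,0,\dots,0,r)\) have \(\epsilon_1\)-coefficient at most \(r\), so the sectors \(r=b-1\) and \(r=b\) contain no nonzero vector of this weight.
\end{itemize}
Hence \(EF=\proj_{b+1}(EF)\ne0\), and by Step 1, \(N_{1b}^{b+1}>0\).

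\textbf{Main obstacle.} The step needing the most care is Step 1. One must justify the superoperator trace identity, and also check that the Hilbert--Schmidt adjoint of \(\twirl_r\) is the twirl with \(E\) and \(E^\dagger\) interchanged, which gives exactly \(|\langle E,K\rangle|^2\). The weight bound in Step 3 is standard: every weight of an irreducible module is the highest weight minus a sum of positive roots, and this can only lower the \(\epsilon_1\)-coefficient.
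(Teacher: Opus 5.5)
Your proof is correct, but it takes a different route from the paper. The paper argues by contradiction and never looks inside a sector. If $N_{1b}^{b+1}=0$, tridiagonality makes $U_b=\mathrm{span}\{\twirl_0,\dots,\twirl_b\}$ stable under multiplication by $\twirl_1$. Since $\twirl_0=(\dim V_n)^{-1}I_{\L(V_n)}\in U_b$ and $\mathcal A=\mathbb C[\twirl_1]$, this forces $U_b=\mathcal A$, contradicting $\dim U_b=b+1<n+1$. The fact $\mathcal A=\mathbb C[\twirl_1]$ rests on the distinctness of the $x_a$ from the Casimir computation of \Cref{sec:x_a}. That argument is short and purely structural, but it gives only $N_{1b}^{b+1}\neq0$.

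Your argument is direct and does not use the spectrum of $\twirl_1$ at all. The identity $d_rN_{cb}^r=\sum_{i,j}\|\proj_r(E_iF_j)\|^2$ gives three things:
\begin{itemize}
\item every twirl-basis structure constant is nonnegative, and $N_{1b}^{b+1}>0$ strictly;
\item it rigorously justifies the last sentence of the proof of \Cref{lem:twirl-product-support}, which the paper asserts without computation: products with no $\E_r$-component contribute no $\twirl_r$-component;
\item it yields the converse of that lemma: $N_{cb}^r\neq0$ if and only if $\proj_r(EF)\neq0$ for some $E\in\E_b$, $F\in\E_c$.
\end{itemize}
The cost is the superoperator trace computation and an explicit polynomial model of $V_n$.

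One simplification: your Step 3 is unnecessary. The product $EF=x_1^{b+1}\partial_q^{b+1}$ is exactly the vector of Step 2 with $b$ replaced by $b+1$. The same highest-weight argument therefore places it in $\E_{b+1}$ directly, without the support lemma or the weight bound.
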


\begin{proof}
Suppose, for contradiction, that \(N_{1b}^{b+1}=0\) for some \(b<n\), and set
\[
  U_b:=\operatorname{span}\{\twirl_0,\twirl_1,\dots,\twirl_b\}\subset\mathcal A.
\]
We claim that \(U_b\) is invariant under left multiplication by \(\twirl_1\).
For \(j<b\), tridiagonality gives
\[
  \twirl_1\twirl_j
  =
  N_{1j}^{j+1}\twirl_{j+1}
  +
  N_{1j}^{j}\twirl_j
  +
  N_{1j}^{j-1}\twirl_{j-1},
\]
and all terms on the right lie in \(U_b\).  For \(j=b\), the assumption
\(N_{1b}^{b+1}=0\) gives
\[
  \twirl_1\twirl_b
  =
  N_{1b}^{b}\twirl_b
  +
  N_{1b}^{b-1}\twirl_{b-1}
  \in U_b.
\]
Thus \(\twirl_1U_b\subseteq U_b\).

Since \(\E_0=\mathbb C I_{V_n}\), we have
\[
  \twirl_0=\frac{1}{\dim V_n}I_{\L(V_n)},
\]
where \(I_{\L(V_n)}\) is the identity superoperator on \(\L(V_n)\).
Hence \(I_{\L(V_n)}\in U_b\).  By \Cref{prop:twirl1-generates},
\(\mathcal A=\mathbb C[\twirl_1]\).  Since \(U_b\) contains the identity
and is invariant under multiplication by \(\twirl_1\), it contains every
polynomial in \(\twirl_1\), hence \(U_b=\mathcal A\).  But
\(\dim U_b=b+1<n+1=\dim\mathcal A\), a contradiction.
\end{proof}

\subsection{The three-term recurrence and the orthogonal polynomial system}

\begin{lemma}
  \label{lem:three-term-and-degrees}
  For each \(b = 0,1,\dots,n-1\), the polynomial identity
  \begin{equation}
    \label{eq:three-term}
    x\,p_b(x)
    =
    N_{1b}^{b+1}\,p_{b+1}(x)
    +
    N_{1b}^{b}\,p_b(x)
    +
    N_{1b}^{b-1}\,p_{b-1}(x)
  \end{equation}
  holds in \(\mathbb C[x]\), with the convention \(p_{-1}=0\).  Moreover,
  \(\deg p_b=b\) for all \(b=0,1,\dots,n\), and
  \(\{p_b\}_{b=0}^n\) is a finite orthogonal polynomial system with respect
  to the discrete measure
  \[
    \mu=\sum_{a=0}^n d_a\,\delta_{x_a}.
  \]
\end{lemma}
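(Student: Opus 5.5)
The plan is to derive the recurrence in the spectral picture, then extract degrees and orthogonality from it.

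\textbf{Step 1: the recurrence on the grid.} Apply the algebra isomorphism $\lambda$ of \Cref{lem:lambda-iso} to the tridiagonal expansion of \Cref{lem:tridiagonality}. Since $\lambda_a$ is multiplicative, and since $\lambda_a(\twirl_1)=x_a$ and $\lambda_a(\twirl_r)=M_{ra}=p_r(x_a)$ by \Cref{prop:twirl1-generates}, we obtain for every $a$
\[
  x_a\,p_b(x_a)=N_{1b}^{b+1}p_{b+1}(x_a)+N_{1b}^{b}p_b(x_a)+N_{1b}^{b-1}p_{b-1}(x_a).
\]
Thus the polynomial $R_b(x):=x\,p_b(x)-N_{1b}^{b+1}p_{b+1}(x)-N_{1b}^{b}p_b(x)-N_{1b}^{b-1}p_{b-1}(x)$ vanishes at the $n+1$ distinct points $x_0,\dots,x_n$ of \Cref{lem:x-distinct}.

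\textbf{Step 2: degrees, by induction on $b$.} This step must be carried out together with the lift to $\mathbb C[x]$, because a priori only $\deg p_b\le n$ is known and $x\,p_b(x)$ could have degree $n+1$. We have $p_0=1/\dim V_n$, since $\twirl_0=I/\dim V_n$, so $\deg p_0=0$. Assume $\deg p_j=j$ for all $j\le b<n$. Solve the grid identity for $p_{b+1}$, which is possible because $N_{1b}^{b+1}\ne0$ by \Cref{lem:forward-nonzero}, and set
\[
  \tilde p_{b+1}(x):=\bigl(x\,p_b(x)-N_{1b}^{b}p_b(x)-N_{1b}^{b-1}p_{b-1}(x)\bigr)/N_{1b}^{b+1}.
\]
This polynomial has degree exactly $b+1\le n$ and agrees with $p_{b+1}$ on all $n+1$ grid points. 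Since both have degree at most $n$, uniqueness in \Cref{prop:twirl1-generates} (equivalently, interpolation on distinct nodes) gives $p_{b+1}=\tilde p_{b+1}$. Hence $\deg p_{b+1}=b+1$, and \eqref{eq:three-term} holds as an identity in $\mathbb C[x]$. The only real subtlety is the degree bound $b+1\le n$, which is exactly why the statement restricts to $b\le n-1$.

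\textbf{Step 3: orthogonality.} Because $\deg p_b=b$ for $0\le b\le n$, the family $\{p_b\}$ is a basis of the polynomials of degree at most $n$ adapted to the degree filtration. Relation \eqref{eq:poly-orthogonality} states $\sum_a d_a\overline{p_b(x_a)}p_c(x_a)=d_b\delta_{bc}$, i.e., orthogonality in $L^2(\mu)$ with $\mu=\sum_a d_a\delta_{x_a}$, which is positive definite on polynomials of degree at most $n$ because it has $n+1$ support points. Together these give a finite orthogonal polynomial system with respect to $\mu$. I expect no genuine obstacle here; the main care lies in Step 2's ordering of the induction, to avoid applying uniqueness before the degree bound is established.
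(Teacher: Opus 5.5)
Your proposal is correct and follows essentially the same route as the paper. Both arguments induct on $b$ from $p_0=1/\dim V_n$ and use $N_{1b}^{b+1}\neq 0$ together with uniqueness of degree-$\le n$ representatives (valid since $b+1\le n$) to lift the grid identity to $\mathbb{C}[x]$, then read off orthogonality from the Plancherel relation; the only difference is cosmetic, in that you evaluate via the characters $\lambda_a$ on grid points while the paper writes the same identity as $N_{1b}^{b+1}\twirl_{b+1}=q(\twirl_1)$ in $\mathcal{A}$.
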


\begin{proof}
We prove the recurrence and the degree statement simultaneously by induction
on \(b\).

\emph{Base case.}
Since \(\E_0=\mathbb C I_{V_n}\), an orthonormal basis of \(\E_0\) is
\[
  \left\{\frac{I_{V_n}}{\sqrt{\dim V_n}}\right\}.
\]
Therefore
\[
  \twirl_0(X)
  =
  \left(\frac{I_{V_n}}{\sqrt{\dim V_n}}\right)^\dagger
  X
  \left(\frac{I_{V_n}}{\sqrt{\dim V_n}}\right)
  =
  \frac{1}{\dim V_n}X.
\]
Thus, as a superoperator,
\[
  \twirl_0=\frac{1}{\dim V_n}I_{\L(V_n)}.
\]
Hence \(p_0(x)=1/\dim V_n\), so \(\deg p_0=0\).

\emph{Inductive step.}
Assume \(\deg p_j=j\) for all \(j\le b\), where \(b<n\).  We have
\begin{equation}
  \label{eq:inductive-q}
  N_{1b}^{b+1}\,\twirl_{b+1}
  =
  \twirl_1\twirl_b
  -
  N_{1b}^{b}\,\twirl_b
  -
  N_{1b}^{b-1}\,\twirl_{b-1}
  =
  q(\twirl_1),
\end{equation}
where
\begin{equation}
  \label{eq:q-def}
  q(x)
  :=
  x\,p_b(x)
  -
  N_{1b}^{b}\,p_b(x)
  -
  N_{1b}^{b-1}\,p_{b-1}(x).
\end{equation}
By the inductive hypothesis, \(\deg p_b=b\) and
\(\deg p_{b-1}=b-1\).  Thus the subtracted terms in \eqref{eq:q-def} have
degree at most \(b\), while \(x p_b(x)\) has degree \(b+1\).  Hence $ \deg q=b+1. $

Since \(b+1\le n\), both \(q\) and \(N_{1b}^{b+1}p_{b+1}\) are
degree-\(\le n\) representatives of the same element of \(\mathcal A\).
By the uniqueness of degree-\(\le n\) representatives from
\Cref{prop:twirl1-generates},
\[
  N_{1b}^{b+1}p_{b+1}=q
  \qquad\text{in }\mathbb C[x].
\]
By \Cref{lem:forward-nonzero}, \(N_{1b}^{b+1}\ne0\).  Therefore
\(\deg p_{b+1}=b+1\).  Rearranging the same identity gives
\eqref{eq:three-term}.

By induction, \(\deg p_b=b\) for all \(b=0,\dots,n\).  Finally, the
discrete orthogonality relation \eqref{eq:poly-orthogonality} gives
\[
  \int \overline{p_b(x)}p_c(x)\,d\mu(x)
  =
  \sum_{a=0}^n d_a\,\overline{p_b(x_a)}p_c(x_a)
  =
  d_b\delta_{bc}.
\]
Together with \(\deg p_b=b\), this shows that
\(\{p_b\}_{b=0}^n\) is a finite orthogonal polynomial system for \(\mu\).
\end{proof}

Evaluating the polynomial recurrence \eqref{eq:three-term} on the spectral
grid gives the corresponding row recurrence for the MacWilliams matrix.
Since \(p_b(x_a)=M_{ba}\), for \(0\le b\le n-1\) and \(0\le a\le n\),
\begin{equation}
  \label{eq:matrix-recurrence}
  x_a\,M_{ba}
  =
  N_{1b}^{b+1}\,M_{b+1,a}
  +
  N_{1b}^{b}\,M_{ba}
  +
  N_{1b}^{b-1}\,M_{b-1,a},
\end{equation}
with the convention \(M_{-1,a}=0\).

\section{Identification with Racah Polynomials}
\label{sec:suq-racah}

We now identify the orthogonal polynomial system constructed above with a
Racah polynomial system.  The identification uses only the spectral lattice,
the orthogonality weights, and the normalization at the trivial sector.

Recall that the eigenvalues of \(\twirl_1\) have
the affine form
\begin{equation}
\label{eq:racah-x-y}
x_a=A-B\,y_a,
\qquad
y_a:=a(a+q-1),
\qquad
a=0,1,\dots,n,
\end{equation}
where
\[
A=\frac{q^2-1}{\dim V_n},
\qquad
B=\frac{q(q^2-1)}{\dim V_n\,n(q-1)(n+q)}. \numberthis
\]
Define
\begin{equation}
\label{eq:ptilde-def}
\widetilde p_b(y):=p_b(A-By).
\end{equation}
By Lemma~\ref{lem:three-term-and-degrees} the polynomial \(p_b\) has degree
\(b\). Hence \(\widetilde p_b\) also has degree \(b\). Moreover,
\eqref{eq:poly-orthogonality} becomes
\begin{equation}
\label{eq:ptilde-orthogonality}
\sum_{a=0}^n d_a\,
\overline{\widetilde p_b(y_a)}\,\widetilde p_c(y_a)
=
d_b\,\delta_{bc}.
\end{equation}
Thus \(\{\widetilde p_b\}_{b=0}^n\) is an orthogonal polynomial system for
the positive discrete measure
\begin{equation}
\label{eq:racah-mu-y}
\mu_y:=\sum_{a=0}^n d_a\,\delta_{y_a}.
\end{equation}

We compare this measure with the standard Racah system in the normalization
recorded in Appendix~\ref{app:racah}.  Using the Racah
parameters
\begin{equation}
\label{eq:racah-params}
\alpha=q-2,
\qquad
\beta=0,
\qquad
\gamma=n+q-1,
\qquad
\delta=-n-1
\end{equation}
then
\[
\gamma+\delta+1=q-1,
\qquad
\beta+\delta+1=-n. \numberthis
\]
Therefore the Racah lattice
\[
\lambda(a)=a(a+\gamma+\delta+1) \numberthis
\]
becomes
\begin{equation}
\label{eq:racah-lattice-match}
\lambda(a)=a(a+q-1)=y_a,
\qquad
a=0,1,\dots,n.
\end{equation}
The finite support condition is \(\beta+\delta+1=-n\), so the Racah support
is exactly \(a=0,1,\dots,n\).

\begin{lemma}[Racah weight]
\label{lem:racah-weight}
For the parameters~\eqref{eq:racah-params}, the Racah orthogonality weight
on \(a=0,1,\dots,n\) is proportional to
\[
d_a
=
\frac{2a+q-1}{q-1}
\binom{a+q-2}{q-2}^{2}. \numberthis
\]
Consequently, the Racah orthogonality measure agrees with \(\mu_y\) up to
an overall scalar factor.
\end{lemma}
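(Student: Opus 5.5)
The plan has two parts: compute $d_a$ in closed form directly from the multiplicity-free decomposition \eqref{eq:Lv-decomp}, and then specialize the standard Racah weight (Appendix~\ref{app:racah}) to the parameters \eqref{eq:racah-params}, cancelling Pochhammer symbols until the same closed form appears.

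\emph{Step 1: the dimension $d_a$.} I avoid the Weyl dimension formula. Instead I apply \eqref{eq:Lv-decomp} with $n$ replaced by $a$ and by $a-1$. Since $\L(V_a)\cong\bigoplus_{j=0}^{a}\E_j$, we get
\[
d_a=\dim\L(V_a)-\dim\L(V_{a-1})
=\binom{a+q-1}{q-1}^{2}-\binom{a+q-2}{q-1}^{2}.
\]
Next I rewrite both binomials in terms of $\binom{a+q-2}{q-2}$:
\[
\binom{a+q-1}{q-1}=\frac{a+q-1}{q-1}\binom{a+q-2}{q-2},
\qquad
\binom{a+q-2}{q-1}=\frac{a}{q-1}\binom{a+q-2}{q-2}.
\]
The difference of squares gives the factor $\bigl((a+q-1)^2-a^2\bigr)/(q-1)^2=(2a+q-1)/(q-1)$. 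This yields the displayed formula for $d_a$, including the case $a=0$, where $d_0=1$.

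\emph{Step 2: the Racah weight.} I will use the standard (Koekoek--Lesky--Swarttouw) form
\[
w(x)=\frac{(\alpha+1)_x(\beta+\delta+1)_x(\gamma+1)_x(\gamma+\delta+1)_x\bigl(\tfrac{\gamma+\delta+3}{2}\bigr)_x}
{(-\alpha+\gamma+\delta+1)_x(-\beta+\gamma+1)_x\bigl(\tfrac{\gamma+\delta+1}{2}\bigr)_x(\delta+1)_x\,x!}.
\]
Substituting \eqref{eq:racah-params} gives the following parameter values:
\[
\alpha+1=q-1,\quad \beta+\delta+1=-n,\quad \gamma+1=n+q,\quad \gamma+\delta+1=q-1,
\]
\[
-\alpha+\gamma+\delta+1=1,\quad -\beta+\gamma+1=n+q,\quad \delta+1=-n.
\]
The factors $(-n)_x$ and $(n+q)_x$ then cancel between numerator and denominator. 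This cancellation is legitimate on the support $0\le x\le n$, because $(-n)_x\neq0$ there.

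What remains is
\[
w(a)=\left(\frac{(q-1)_a}{a!}\right)^{2}\cdot\frac{\bigl(\tfrac{q+1}{2}\bigr)_a}{\bigl(\tfrac{q-1}{2}\bigr)_a}.
\]
The last ratio telescopes to $(2a+q-1)/(q-1)$, and $(q-1)_a/a!=\binom{a+q-2}{q-2}$. Hence $w(a)=d_a$ exactly, and the proportionality constant is $1$ in this normalization. The support and lattice have already been matched in \eqref{eq:racah-lattice-match}, so the Racah measure $\sum_a w(a)\delta_{\lambda(a)}$ coincides with $\mu_y$ up to the normalizing scalar used in the appendix.

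\emph{Main obstacle.} The main risk is normalization. I need to make sure the form of $w$ recorded in Appendix~\ref{app:racah} matches the one above; some references use a different overall normalization or swap the roles of $\beta$ and $\delta$ in the truncation condition. If the appendix differs, I would redo the cancellation with its convention. Any discrepancy should only affect an $a$-independent factor, which is exactly why the statement asserts proportionality rather than equality.

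I would also check the boundary case $q=2$, where $\alpha=0$. There $\bigl(\tfrac{q-1}{2}\bigr)_a=(1/2)_a$ is nonzero, so the formula reduces to $d_a=2a+1$, as expected for $\SU(2)$.
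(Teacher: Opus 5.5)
Your proposal is correct and is essentially the paper's argument. Your Step~2 is exactly the specialization the paper carries out, and your ratio $\bigl(\tfrac{q+1}{2}\bigr)_a/\bigl(\tfrac{q-1}{2}\bigr)_a$ equals the paper's factor $\tfrac{2a+q-1}{q-1}$, so the appendix convention matches Koekoek--Lesky--Swarttouw and your normalization concern does not arise. The only addition is your Step~1, which derives $d_a$ by telescoping $\dim\L(V_a)-\dim\L(V_{a-1})$ instead of citing the Weyl dimension formula as the paper does; this is valid because the isomorphism type $(a,0,\dots,0,a)$ of $\E_a$ does not depend on $n$.
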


\begin{proof}
This is the specialization of the standard Racah weight recorded in
Appendix~\ref{app:racah}.  Substituting
\[
\alpha=q-2,\qquad
\beta=0,\qquad
\gamma=n+q-1,\qquad
\delta=-n-1 \numberthis
\]
gives, up to a scalar independent of \(a\),
\[
w(a)
\propto
\frac{2a+q-1}{q-1}
\frac{(q-1)_a^2}{(a!)^2}. \numberthis
\]
Since
\[
(q-1)_a=\frac{(a+q-2)!}{(q-2)!}, \numberthis
\]
we obtain
\[
w(a)
\propto
\frac{2a+q-1}{q-1}
\binom{a+q-2}{q-2}^{2}.\numberthis
\]
This is precisely \(d_a\), by the dimension formula for
\(\E_a\cong(a,0,\dots,0,a)\).
\end{proof}

It remains to fix the normalization.  We use the value at the trivial
sector.

\begin{lemma}[Normalization at the trivial sector]
\label{lem:Mb0}
For every \(b=0,1,\dots,n\),
\begin{equation}
\label{eq:Mb0}
M_{b0}=\frac{d_b}{\dim V_n}.
\end{equation}
\end{lemma}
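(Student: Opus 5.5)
The plan is to compute \(M_{b0}\) directly from the definition of the twirl on the trivial sector \(\E_0=\mathbb C I_{V_n}\). By \eqref{eq:twirl-on-sector}, \(M_{b0}\) is the scalar by which \(\twirl_b\) acts on \(\E_0\), so
\[
  \twirl_b(I_{V_n})=M_{b0}\,I_{V_n}.
\]
Taking the trace of both sides pins down the scalar. This is the same argument used above for the check \(x_0=(q^2-1)/\dim V_n\), now with an arbitrary \(b\) in place of \(b=1\).

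First I would evaluate \(\twirl_b(I_{V_n})\) using \eqref{eq:twirl-def}:
\[
  \twirl_b(I_{V_n})=\sum_{E\in\B_b}E^\dagger E .
\]
Its trace is
\[
  \sum_{E\in\B_b}\Tr(E^\dagger E)=\sum_{E\in\B_b}\ip{E,E}=|\B_b|=d_b,
\]
because \(\B_b\) is orthonormal for the Hilbert--Schmidt inner product \eqref{eq:hs-ip}. On the other side, \(\Tr(M_{b0}I_{V_n})=M_{b0}\dim V_n\). Equating the two traces gives \(M_{b0}=d_b/\dim V_n\).

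The only point needing justification is that \(I_{V_n}\) spans \(\E_0\), which ensures \(\twirl_b(I_{V_n})\) is a scalar multiple of the identity. I would justify it as follows. By \eqref{eq:Lv-decomp}, \(\E_0\) is the one-dimensional trivial isotypic component, so it consists exactly of the \(G\)-invariant operators. The identity \(I_{V_n}\) is invariant under conjugation and nonzero, so it spans \(\E_0\). Moreover, \(\twirl_b\) is \(G\)-equivariant, so it maps \(\E_0\) into itself, which is consistent with \eqref{eq:twirl-on-sector}.

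As a sanity check, the case \(b=0\) gives \(M_{00}=1/\dim V_n\), which agrees with \(\twirl_0=\frac{1}{\dim V_n}I_{\L(V_n)}\) from the proof of \Cref{lem:three-term-and-degrees}. The argument is a one-line trace computation, and I do not expect any real obstacle.
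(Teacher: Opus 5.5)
Your proposal is correct and follows the paper's proof essentially verbatim: apply $\twirl_b$ to $I_{V_n}\in\E_0$, take traces, and use orthonormality of $\B_b$ to get $\Tr(\twirl_b(I_{V_n}))=d_b=M_{b0}\dim V_n$. You also justify that $I_{V_n}$ spans $\E_0$, since the trivial isotypic component consists of the $G$-invariant operators; the paper simply asserts this.
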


\begin{proof}
The sector \(\E_0\) is the trivial sector spanned by \(I_{V_n}\).  Since
\(\twirl_b\) is \(G\)-equivariant, it acts on \(\E_0\) by the scalar
\(M_{b0}\):
\[
\twirl_b(I_{V_n})=M_{b0}I_{V_n}. \numberthis
\]
Taking traces gives
\[
M_{b0}\dim V_n=\Tr(\twirl_b(I_{V_n})). \numberthis
\]
Let \(\{E_i\}_{i=1}^{d_b}\) be an orthonormal basis of \(\E_b\).  By the
definition of \(\twirl_b\),
\[
\Tr(\twirl_b(I_{V_n}))
=
\sum_{i=1}^{d_b}\Tr(E_i^\dagger E_i)
=
d_b. \numberthis
\]
Therefore \(M_{b0}=d_b/\dim V_n\).
\end{proof}

Define the normalized row polynomial
\begin{equation}
\label{eq:Qb-def}
Q_b(y):=\frac{\dim V_n}{d_b}\,\widetilde p_b(y)
=
\frac{\dim V_n}{d_b}\,p_b(A-By).
\end{equation}
Then
\[
Q_b(y_a)=\frac{\dim V_n}{d_b}M_{ba}. \numberthis
\]
Since \(y_0=0\), Lemma~\ref{lem:Mb0} gives
\begin{equation}
\label{eq:Qb-y0}
Q_b(y_0)=1.
\end{equation}
Moreover \(Q_b\) has degree \(b\) and is orthogonal to every polynomial of
degree less than \(b\) with respect to \(\mu_y\).

Let \(R_b(y)\) denote the Racah polynomial with parameters
\eqref{eq:racah-params}, normalized by
\[
R_b(y_0)=1. \numberthis
\]
With the convention of Appendix~\ref{app:racah}, its values on the lattice
\(y_a=a(a+q-1)\) are
\begin{equation}
\label{eq:Rb-hypergeom}
R_b(y_a)
=
{}_4F_3\!\left(
\begin{matrix}
-b,\ b+q-1,\ -a,\ a+q-1\\
q-1,\ -n,\ n+q
\end{matrix}
;\,1
\right).
\end{equation}

\begin{theorem}[Racah identification]
\label{thm:racah-identification}
For each \(b=0,1,\dots,n\),
\begin{equation}
\label{eq:Qb-Rb}
Q_b(y)=R_b(y).
\end{equation}
Equivalently,
\begin{equation}
\label{eq:p-racah-identification}
p_b(A-By)
=
\frac{d_b}{\dim V_n}\,R_b(y).
\end{equation}
\end{theorem}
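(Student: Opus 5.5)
The plan is to use the uniqueness of a finite orthogonal polynomial system for a given positive discrete measure, once degrees and a normalization are fixed.

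\textbf{Gram--Schmidt uniqueness.} Let \(\mu_y\) be the measure \eqref{eq:racah-mu-y}. It is supported on the \(n+1\) distinct points \(y_0<\dots<y_n\) (distinctness as in \Cref{lem:x-distinct}) and has positive weights \(d_a\). The inner product
\[
\langle f,g\rangle_{\mu_y}=\sum_a d_a\overline{f(y_a)}g(y_a)
\]
is positive definite on polynomials of degree \(\le n\). Hence, for each \(b\le n\), the space \(\mathcal{P}_b\ominus\mathcal{P}_{b-1}\) is one-dimensional, where \(\mathcal P_b\) denotes polynomials of degree \(\le b\). So any two degree-\(b\) polynomials orthogonal to \(\mathcal P_{b-1}\) are scalar multiples of each other.

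\textbf{Matching the two families.} First, \(Q_b\) lies in this line. This follows from \Cref{lem:three-term-and-degrees} (\(\deg p_b=b\), so \(\deg Q_b=b\)) together with \eqref{eq:ptilde-orthogonality}: \(Q_b\) is orthogonal to \(Q_0,\dots,Q_{b-1}\), which span \(\mathcal P_{b-1}\). Second, \(R_b\) lies in the same line. By the standard Racah orthogonality in Appendix~\ref{app:racah}, \(R_b\) has degree \(b\) in \(\lambda(a)=a(a+\gamma+\delta+1)\) and is orthogonal to lower degrees against the Racah weight \(w(a)\). By \eqref{eq:racah-lattice-match}, \(\lambda(a)=y_a\) with the parameters \eqref{eq:racah-params}, and by \Cref{lem:racah-weight}, \(w\) is a positive multiple of \(d_a\). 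Rescaling the measure does not change orthogonality, so \(R_b\) is orthogonal to \(\mathcal P_{b-1}\) with respect to \(\mu_y\). Hence \(Q_b=c_bR_b\) for some scalar \(c_b\).

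\textbf{Fixing the scalar.} Evaluate at \(y_0=0\). We have \(Q_b(0)=1\) by \eqref{eq:Qb-y0}, and \(R_b(0)=1\) because the \({}_4F_3\) in \eqref{eq:Rb-hypergeom} with \(a=0\) terminates at its first term. Therefore \(c_b=1\). Equation \eqref{eq:p-racah-identification} then follows by unwinding \eqref{eq:Qb-def}.

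\textbf{Main obstacle: degeneracies in the Racah parameters.} Two points need care. The parameters \(\beta=0\), \(\delta=-n-1\) make the lower parameters \(-n\) and \(q-1\) legitimate: \(-n\) terminates the series only after \(n\) steps, and \(q-1\ge1\). So \(R_b\) is well defined for \(b\le n\), has exact degree \(b\) (its leading coefficient \((-b)_b(b+q-1)_b/((q-1)_b(-n)_b(n+q)_b)\) is nonzero), and the Appendix weight is positive on the support with no vanishing Pochhammer denominators. The second point is \(q=2\), where \(\alpha=0\) and the formula should be checked directly to reduce to the \(\SU(2)\) case. If the general Racah weight formula is awkward at these boundary parameters, the fallback is to verify orthogonality via the three-term recurrence directly: match the Racah recurrence coefficients against \eqref{eq:three-term} after the affine change of variables, using Favard uniqueness with the initial normalization.
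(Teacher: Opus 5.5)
Your proposal is correct and follows the same route as the paper: identify the lattice $y_a=\lambda(a)$ and the weight $d_a\propto w(a)$, note that $Q_b$ and $R_b$ are both degree-$b$ orthogonal polynomials for $\mu_y$, and match them using the value $1$ at $y_0=0$. Your uniqueness step is in fact cleaner than the Appendix's argument: the orthogonal complement of $\mathcal P_{b-1}$ in $\mathcal P_b$ is a line, so $Q_b=c_bR_b$, and $R_b(0)=1\neq 0$ forces $c_b=1$. The Appendix instead asserts that two such polynomials differ by something of degree at most $b-1$, which does not follow when the normalization is imposed at a point rather than on the leading coefficient.
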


\begin{proof}
The polynomial \(Q_b\) has degree \(b\), satisfies \(Q_b(y_0)=1\), and is
orthogonal to all polynomials of degree less than \(b\) with respect to
\(\mu_y\).  By Lemma~\ref{lem:racah-weight}, the measure \(\mu_y\) is a
nonzero scalar multiple of the Racah orthogonality measure for the parameters
\eqref{eq:racah-params}.  Multiplying a measure by a nonzero scalar does not
change its orthogonal polynomials.  Hence \(Q_b\) and \(R_b\) are
degree-\(b\) orthogonal polynomials for the same finite positive measure, and
both satisfy the normalization \(Q_b(y_0)=R_b(y_0)=1\).  By uniqueness of the
degree-\(b\) orthogonal polynomial with this normalization, recorded in
Appendix~\ref{app:racah}, we have \(Q_b=R_b\).
\end{proof}

\begin{corollary}[Explicit MacWilliams matrix]
\label{cor:explicit-M}
For \(a,b=0,1,\dots,n\),
\begin{equation}
\label{eq:Mba-explicit}
M_{ba}
=
\frac{d_b}{\dim V_n}
\;
{}_4F_3\!\left(
\begin{matrix}
-b,\ b+q-1,\ -a,\ a+q-1\\
q-1,\ -n,\ n+q
\end{matrix}
;\,1
\right)
\end{equation}
Here
\[
d_b
=
\frac{2b+q-1}{q-1}
\binom{b+q-2}{q-2}^{2},
\;
\dim V_n=\binom{n+q-1}{n} \numberthis
\]
\end{corollary}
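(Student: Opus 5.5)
The corollary is essentially a restatement of \Cref{thm:racah-identification}, evaluated on the spectral grid. The plan is to combine three earlier facts: the evaluation formula \(p_b(x_a)=M_{ba}\) from \Cref{prop:twirl1-generates}, the affine relation \(x_a=A-By_a\) from \eqref{eq:racah-x-y}, and the identity \eqref{eq:p-racah-identification}.

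\emph{Step 1 (evaluate the identity).} Fix \(a\in\{0,\dots,n\}\) and substitute \(y=y_a=a(a+q-1)\) into \eqref{eq:p-racah-identification}. This gives
\[
  M_{ba}=p_b(x_a)=p_b(A-By_a)=\frac{d_b}{\dim V_n}\,R_b(y_a).
\]

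\emph{Step 2 (insert the hypergeometric formula).} Replace \(R_b(y_a)\) by its \({}_4F_3\) expression \eqref{eq:Rb-hypergeom}. This expression comes from the Racah parameters \eqref{eq:racah-params} with the convention of Appendix~\ref{app:racah}. As a sanity check I will confirm that the lower parameters match the standard ones:
\[
  \alpha+1=q-1,\qquad \beta+\delta+1=-n,\qquad \gamma+1=n+q.
\]
I will also check that the upper parameters \(-b,\ b+\alpha+\beta+1=b+q-1,\ -a,\ a+\gamma+\delta+1=a+q-1\) match. Since \(-b\) is a nonpositive integer, the series terminates. The lower parameter \(-n\) does not cause a division by zero, because the summation index runs only up to \(\min(a,b)\le n\).

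\emph{Step 3 (dimension formulas).} The formula \(d_b=\frac{2b+q-1}{q-1}\binom{b+q-2}{q-2}^2\) is the Weyl dimension formula for \((b,0,\dots,0,b)\), which was already used in \Cref{lem:racah-weight}. The formula for \(\dim V_n\) is \eqref{eq:dimVn}.

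\emph{Checks.} At \(a=0\) the \({}_4F_3\) equals \(1\), which reproduces \Cref{lem:Mb0}. At \(b=1\) one should recover \(x_a\) from \Cref{lem:twirl1-eigenvalues}. This \(b=1\) check is the one place where an error could surface, namely if the Appendix normalization of \(R_b\) differs. I will verify it directly: the \(k=1\) term is
\[
  \frac{(-1)(q)(-a)(a+q-1)}{(q-1)(-n)(n+q)}=-\frac{q\,a(a+q-1)}{n(q-1)(n+q)}.
\]
Multiplying \(1\) plus this term by \(d_1/\dim V_n=(q^2-1)/\dim V_n\) gives exactly \eqref{eq:xa-final}. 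So no real obstacle remains, and the corollary follows.
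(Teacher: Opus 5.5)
Your proposal is correct and matches the paper's proof. The paper writes $M_{ba}=\frac{d_b}{\dim V_n}Q_b(y_a)$ using \eqref{eq:Qb-def}, applies $Q_b=R_b$ from \Cref{thm:racah-identification}, and substitutes \eqref{eq:Rb-hypergeom}; this is your Steps 1--2 expressed through the equivalent form \eqref{eq:p-racah-identification}. Your additional checks (parameter matching, $(-n)_k\neq 0$ for $k\le n$, and the $a=0$ and $b=1$ cases) are accurate but not needed for the argument.
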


\begin{proof}
By \eqref{eq:Qb-def},
\[
M_{ba}=\frac{d_b}{\dim V_n}Q_b(y_a). \numberthis
\]
Theorem~\ref{thm:racah-identification} gives \(Q_b=R_b\), and substituting
the Racah value~\eqref{eq:Rb-hypergeom} gives~\eqref{eq:Mba-explicit}.
\end{proof}

Note that for \(b=0\),
\[
M_{0a}=\frac{1}{\dim V_n}. \numberthis
\]
And for \(b=1\), one has \(d_1=q^2-1\), and
\eqref{eq:Mba-explicit} gives
\[
M_{1a}
=
\frac{q^2-1}{\dim V_n}
{}_4F_3\!\left(
\begin{matrix}
-1,\ q,\ -a,\ a+q-1\\
q-1,\ -n,\ n+q
\end{matrix}
;\,1
\right). \numberthis
\]
Since the first numerator parameter is \(-1\), the series terminates after
the linear term:
\[
{}_4F_3\!\left(
\begin{matrix}
-1,\ q,\ -a,\ a+q-1\\
q-1,\ -n,\ n+q
\end{matrix}
;\,1
\right)
=
1-\frac{q\,a(a+q-1)}{n(q-1)(n+q)}. \numberthis
\]
Thus
\[
M_{1a}
=
\frac{q^2-1}{\dim V_n}
\left(
1-\frac{q\,a(a+q-1)}{n(q-1)(n+q)}
\right), \numberthis
\]
which agrees with the direct Casimir computation of the degree-one twirl in
Section~\ref{sec:x_a}. 

Also, for \(q=2\), one has
\[
\dim V_n=n+1,\qquad d_a=2a+1,\qquad y_a=a(a+1). \numberthis
\]
Thus
\begin{equation}
\label{eq:su2-M-clean}
M_{ba}
=
\frac{2b+1}{n+1}
\,{}_4F_3\!\left(
\begin{matrix}
-b,\ b+1,\ -a,\ a+1\\
1,\ -n,\ n+2
\end{matrix}
;\,1
\right).
\end{equation}
This is the Racah system underlying the corresponding Wigner \(6j\)-symbol
formula agreeing with direct computation given in \cite{usIntrinsicMacWilliams}.

\section{Properties of The MacWilliams Transform}
\label{sec:structure}

\subsection{Orthogonality, detailed balance, and involutivity}
\label{sec:orth-inverse}

We collect the structural identities satisfied by the intrinsic MacWilliams
matrix. Let $ D $ denote the diagonal matrix with diagonal entries $ d_0,\dots,d_n $.

\begin{proposition}[Orthogonality]
\label{prop:orthogonality}
The intrinsic MacWilliams matrix satisfies
\begin{equation}
\label{eq:orthogonality-clean}
M D M^T = D.
\end{equation}
Equivalently, for all \(b,c=0,1,\dots,n\),
\begin{equation}
\label{eq:orthogonality-entry}
\sum_{a=0}^n d_a M_{ba}M_{ca}=d_b\delta_{bc}.
\end{equation}
\end{proposition}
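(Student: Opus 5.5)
The plan is to combine the Plancherel identity \eqref{eq:plancherel} from \Cref{sec:setup} with the reality of the entries of \(M\), which follows from \Cref{cor:explicit-M}. Recall why \eqref{eq:plancherel} holds: \(\{d_a^{-1/2}\proj_a\}\) and \(\{d_b^{-1/2}\twirl_b\}\) are both orthonormal bases of \((\mathcal A,\lsuper\cdot,\cdot\rsuper)\). Hence the matrix \(U:=D^{-1/2}MD^{1/2}\), which expresses one basis in terms of the other, is unitary. Expanding \(UU^\dagger=I\) gives \(MD\overline{M}^{\,T}=D\). Concretely, one computes \(\lsuper\twirl_b,\twirl_c\rsuper\) in two ways. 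Via \eqref{eq:twirl-expansion} and \(\lsuper\proj_a,\proj_{a'}\rsuper=\delta_{aa'}d_a\) it equals \(\sum_a d_a\overline{M_{ba}}M_{ca}\). Via orthonormality of the normalized twirls it equals \(d_b\delta_{bc}\). This yields \eqref{eq:orthogonality-entry} with a complex conjugate on \(M_{ba}\).

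The remaining step is to remove the conjugation. By \Cref{cor:explicit-M}, each \(M_{ba}\) equals \(d_b/\dim V_n\) times a terminating \({}_4F_3\) series whose parameters \(-b,\,b+q-1,\,-a,\,a+q-1,\,q-1,\,-n,\,n+q\) are all integers. The lower parameters \(-n\) are never hit before termination, because the \(k\)th term involves \((-n)_k\) only for \(k\le\min(a,b)\le n\). So every term is rational, and \(M_{ba}\in\mathbb R\). Consequently \(\overline{M}=M\), and \eqref{eq:plancherel} becomes \(MDM^T=D\), which is \eqref{eq:orthogonality-clean}. Reading off the \((b,c)\) entry gives \eqref{eq:orthogonality-entry}.

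There is also an independent check on reality that avoids the hypergeometric formula. The \(p_b\) arise from the real three-term recurrence \eqref{eq:three-term} with real initial value \(p_0=1/\dim V_n\), evaluated at the real grid \(x_a\). It would still have to be argued that the coefficients \(N_{1b}^r\) are real. One route is that each \(\twirl_b\) is self-adjoint with respect to \(\lsuper\cdot,\cdot\rsuper\), since \(\ip{Y,\twirl_b X}=\sum_E\Tr(E^\dagger Y^\dagger E^\dagger\cdots)\) is symmetric after choosing \(\B_b\) closed under \(\dagger\), which is possible because \(\E_b\) is \(\dagger\)-stable. A self-adjoint operator has real eigenvalues, so \(M_{ba}=\lambda_a(\twirl_b)\in\mathbb R\) directly. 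I would present this self-adjointness argument as the primary justification, since it is more intrinsic, and cite the explicit formula as confirmation.

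The only delicate point is this reality step; the orthogonality itself is immediate from \eqref{eq:plancherel}. In the self-adjointness route, the care needed is in verifying that \(\E_b\) admits an orthonormal basis of Hermitian operators. This holds because conjugation commutes with \(X\mapsto X^\dagger\), so each isotypic sector is \(\dagger\)-invariant, and because \(\twirl_b\) is basis-independent.
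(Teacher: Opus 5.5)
Your proof is correct and has the same skeleton as the paper's, which is exactly \eqref{eq:plancherel} together with reality of \(M\); the difference is where reality comes from. The paper takes reality from the explicit \({}_4F_3\) formula of Corollary~\ref{cor:explicit-M}, whereas you make self-adjointness of \(\twirl_b\) the primary reason. Your route is more intrinsic and gives logical independence. It yields \(M_{ba}=\lambda_a(\twirl_b)\in\mathbb R\) from the definitions alone, so \(MDM^T=D\) is already available in Section~\ref{sec:setup}, without the Racah identification of Section~\ref{sec:suq-racah} and everything feeding into it. The paper's route is shorter on the page only because that machinery has already been built.

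One justification in your self-adjointness step needs tightening. The fact that \(X\mapsto X^\dagger\) commutes with conjugation does not by itself make each isotypic sector \(\dagger\)-stable. Because \(\dagger\) is antilinear and equivariant, it carries a sector of type \(\lambda\) onto one of type \(\overline{\lambda}\cong\lambda^*\); for example, it swaps the \(\mathbf{10}\) and \(\overline{\mathbf{10}}\) inside \(\L(\mathbf 8)\) for \(\SU(3)\). Here the claim holds because each \((b,0,\dots,0,b)\) is self-dual and \(\L(V_n)\) is multiplicity-free. Alternatively, \(\mathscr C(X)^\dagger=\mathscr C(X^\dagger)\) since the \(J_\mu\) are Hermitian, and \(\mathscr C\) has real, pairwise distinct eigenvalues \(2a(a+q-1)\) on the \(\E_a\), so each \(\E_a\) is \(\dagger\)-stable.

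With \(\dagger\)-stability in hand you do not even need a Hermitian basis. The set \(\{E^\dagger:E\in\B_b\}\) is again an orthonormal basis of \(\E_b\), since \(\ip{E^\dagger,F^\dagger}=\ip{F,E}\). Basis-independence then gives \(\twirl_b(X)=\sum_E EXE^\dagger\), which is exactly the Hilbert--Schmidt adjoint of \(\twirl_b\). Your rationality check on the terms of the \({}_4F_3\) series is correct as a confirmation.
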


\begin{proof}
This is the Plancherel identity~\eqref{eq:plancherel} established in
Section~\ref{sec:setup}, together with the realness of \(M\), which follows
from the explicit formula in Corollary~\ref{cor:explicit-M}.
\end{proof}

\begin{proposition}[Inverse transform]
\label{prop:inverse-transform}
The inverse of \(M\) is
\begin{equation}
\label{eq:M-inverse-clean}
M^{-1}=D M^T D^{-1}.
\end{equation}
Equivalently,
\begin{equation}
\label{eq:M-inverse-entry-clean}
(M^{-1})_{ab}
=
\frac{d_a}{d_b}M_{ba}.
\end{equation}
\end{proposition}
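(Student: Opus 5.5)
The plan is to derive the inverse formula directly from the orthogonality relation of Proposition~\ref{prop:orthogonality}, using only that $D$ is invertible. All $d_a=\dim\E_a\ge 1$, so $D$ and $D^{-1}$ exist and are diagonal with positive real entries.

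First, rewrite $MDM^T=D$ by multiplying on the right by $D^{-1}$, which gives
\[
M\,(DM^TD^{-1})=I .
\]
Since $M$ is a square $(n+1)\times(n+1)$ matrix, a one-sided inverse is a two-sided inverse. (Alternatively, invertibility of $M$ is already known, because it is a change-of-basis matrix between the bases $\{\proj_a\}$ and $\{\twirl_b\}$ of $\mathcal A$.) Hence $M^{-1}=DM^TD^{-1}$, which is \eqref{eq:M-inverse-clean}. Note that this is just the specialization of \eqref{eq:M-inverse-setup} to real $M$. Realness of $M$ follows from Corollary~\ref{cor:explicit-M}: the ${}_4F_3$ has integer parameters and terminates, so it is a finite sum of rationals.

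Second, read off the entries. Because $D$ is diagonal,
\[
(DM^TD^{-1})_{ab}=d_a\,(M^T)_{ab}\,d_b^{-1}=\frac{d_a}{d_b}M_{ba},
\]
which is \eqref{eq:M-inverse-entry-clean}.

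There is essentially no obstacle. The only points needing care are justifying realness of $M$, so that the conjugate in \eqref{eq:plancherel} can be dropped, and noting that $M$ is square, so the right inverse is also a left inverse.
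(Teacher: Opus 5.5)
Your proof is correct and follows the paper's argument: right-multiply $MDM^T=D$ by $D^{-1}$ to get $M^{-1}=DM^TD^{-1}$, then read off the $(a,b)$ entry using that $D$ is diagonal. Your two extra remarks make explicit what the paper leaves implicit: that a one-sided inverse of a square matrix is two-sided, and that $M$ is real because the ${}_4F_3$ in Corollary~\ref{cor:explicit-M} is a terminating sum of rationals.
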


\begin{proof}
From \(MDM^T=D\), right-multiplication by \(D^{-1}\) gives
\[
M D M^T D^{-1}=I. \numberthis
\]
Thus
\[
M^{-1}=D M^T D^{-1}. \numberthis
\]
Taking the \((a,b)\)-entry gives
\[
(M^{-1})_{ab}
=
D_{aa}(M^T)_{ab}(D^{-1})_{bb}
=
d_a M_{ba}\frac{1}{d_b}
=
\frac{d_a}{d_b}M_{ba}.
\]
\end{proof}

\begin{proposition}[Detailed balance]
\label{prop:detailed-balance}
For all \(a,b=0,1,\dots,n\),
\begin{equation}
\label{eq:detailed-balance-clean}
d_aM_{ba}=d_bM_{ab}.
\end{equation}
Equivalently,
\begin{equation}
\label{eq:DMt-MD}
D M^T = M D.
\end{equation}
\end{proposition}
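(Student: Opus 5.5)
The plan is to read detailed balance directly off the explicit formula of Corollary~\ref{cor:explicit-M}, using the symmetry of the terminating ${}_4F_3$ in $a$ and $b$. By \eqref{eq:Mba-explicit},
\[
\frac{M_{ba}}{d_b}=\frac{1}{\dim V_n}\,{}_4F_3\!\left(\begin{matrix}-b,\ b+q-1,\ -a,\ a+q-1\\ q-1,\ -n,\ n+q\end{matrix};1\right).
\]
The numerator parameters enter only as the unordered multiset $\{-b,b+q-1,-a,a+q-1\}$, and the denominator parameters do not depend on $a$ or $b$. Swapping $a\leftrightarrow b$ therefore leaves the hypergeometric factor unchanged, since a ${}_pF_q$ is symmetric in its numerator parameters term by term. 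This is the usual Racah duality $R_b(\lambda(a))=R_a(\lambda(b))$, which holds here because the parameters \eqref{eq:racah-params} are self-dual.

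It follows that $M_{ba}/d_b=M_{ab}/d_a$, which rearranges to $d_aM_{ba}=d_bM_{ab}$. I need to check that the series is genuinely terminating and well defined for all $0\le a,b\le n$. It terminates at index $\min(a,b)\le n$. The denominator $(-n)_k$ vanishes only for $k\ge n+1$, and $(q-1)_k$ and $(n+q)_k$ never vanish. So the same finite sum appears on both sides. The matrix form follows entrywise: $(MD)_{ba}=M_{ba}d_a$ and $(DM^T)_{ba}=d_bM_{ab}$, so \eqref{eq:DMt-MD} is \eqref{eq:detailed-balance-clean}.

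As a cross-check independent of the hypergeometric formula, I would also note a structural route. Detailed balance $DM^T=MD$ combined with Proposition~\ref{prop:inverse-transform} is equivalent to $M^{-1}=MDD^{-1}=M$, i.e.\ to involutivity. A direct proof could therefore instead establish $M^2=I$ intrinsically, for instance via $\sum_b M_{ba}M_{cb}=\delta_{ac}$ using the identity $\Tr$-pairing of twirls and projectors. That approach is harder, so I rely on the explicit formula.

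The only real obstacle is making sure the symmetry claim is legitimate: that the entries of Corollary~\ref{cor:explicit-M} are exactly symmetric after dividing by $d_b$, with no hidden $b$-dependence in the normalization. The prefactor is $d_b/\dim V_n$ and $\dim V_n$ is independent of $a$ and $b$, so this holds.
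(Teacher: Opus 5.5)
Your proposal is correct and follows the paper's proof essentially verbatim. Both read detailed balance off the explicit formula of Corollary~\ref{cor:explicit-M}, noting that the ${}_4F_3$ is invariant under swapping the numerator pairs $(-b,b+q-1)$ and $(-a,a+q-1)$ while the denominator parameters and $\dim V_n$ do not depend on $a$ or $b$. Your check that the denominators $(-n)_k$ are nonzero for $k\le\min(a,b)\le n$ is a small detail the paper leaves implicit.
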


\begin{proof}
By the explicit formula~\eqref{eq:Mba-explicit},
\[
M_{ba}
=
\frac{d_b}{\dim V_n}
\;
{}_4F_3\!\left(
\begin{matrix}
-b,\ b+q-1,\ -a,\ a+q-1\\
q-1,\ -n,\ n+q
\end{matrix}
;\,1
\right). \numberthis
\]
Therefore
\[
d_aM_{ba}
=
\frac{d_ad_b}{\dim V_n}
\;
{}_4F_3\!\left(
\begin{matrix}
-b,\ b+q-1,\ -a,\ a+q-1\\
q-1,\ -n,\ n+q
\end{matrix}
;\,1
\right). \numberthis
\]
The hypergeometric series is symmetric under interchanging the two numerator
pairs \cite{AskeyWilson1979,KoekoekSwarttouw1998}
\[
(-b,b+q-1)
\qquad\text{and}\qquad
(-a,a+q-1). \numberthis
\]
Hence the last expression is symmetric in \(a\) and \(b\), proving
\[
d_aM_{ba}=d_bM_{ab}. \numberthis
\]
\end{proof}

\begin{corollary}[Involutivity]
\label{cor:M-involutive}
The intrinsic MacWilliams matrix is an involution:
\begin{equation}
\label{eq:M-involutive}
M^2=I.
\end{equation}
Equivalently,
\[
M^{-1}=M. \numberthis
\]
\end{corollary}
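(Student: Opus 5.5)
The plan is to combine the two preceding structural results, Proposition~\ref{prop:inverse-transform} and Proposition~\ref{prop:detailed-balance}. No new representation-theoretic or hypergeometric input should be needed: involutivity is a formal consequence of the inversion formula together with detailed balance.

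\emph{Step 1.} By Proposition~\ref{prop:inverse-transform}, which itself rests on the orthogonality relation \(MDM^T=D\) and the realness of \(M\) from Corollary~\ref{cor:explicit-M}, we have
\[
M^{-1}=DM^TD^{-1}.
\]

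\emph{Step 2.} By Proposition~\ref{prop:detailed-balance}, in its matrix form \eqref{eq:DMt-MD}, we have \(DM^T=MD\). I will check the indices once to make sure this matrix form matches the entrywise statement. The \((b,a)\) entry of \(MD\) is \(M_{ba}d_a\), while the \((b,a)\) entry of \(DM^T\) is \(d_bM_{ab}\). Their equality is exactly \(d_aM_{ba}=d_bM_{ab}\).

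\emph{Step 3.} Substituting Step 2 into Step 1 gives
\[
M^{-1}=(DM^T)D^{-1}=(MD)D^{-1}=M.
\]
This uses only that \(D\) is invertible, which holds since every \(d_a=\dim\E_a\ge1\). Multiplying \(M^{-1}=M\) on the left by \(M\) yields \(M^2=I\), and the two formulations in the statement are equivalent because \(M\) is invertible.

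The only point requiring care is the transpose and index bookkeeping in Step 2. The substantive content of the result, namely the symmetry of the \({}_4F_3\) under exchanging the numerator pairs \((-b,b+q-1)\) and \((-a,a+q-1)\), is already contained in Proposition~\ref{prop:detailed-balance}. As a sanity check on normalizations, I would verify the \((0,0)\) entry of \(M^2\):
\[
\sum_b M_{0b}M_{b0}=\sum_b \frac{1}{\dim V_n}\cdot\frac{d_b}{\dim V_n}=\frac{\sum_b d_b}{(\dim V_n)^2}=1,
\]
using Lemma~\ref{lem:Mb0} and \(\sum_b d_b=\dim\L(V_n)=(\dim V_n)^2\).
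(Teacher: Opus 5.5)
Your proposal is correct and matches the paper's proof: combine $M^{-1}=DM^TD^{-1}$ from Proposition~\ref{prop:inverse-transform} with the detailed-balance identity $DM^T=MD$ from Proposition~\ref{prop:detailed-balance} to get $M^{-1}=M$, and hence $M^2=I$. Your check that the $(b,a)$ entry of $DM^T=MD$ is exactly $d_aM_{ba}=d_bM_{ab}$ is accurate, and so is your $(0,0)$-entry sanity check.
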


\begin{proof}
By Proposition~\ref{prop:inverse-transform},
\[
M^{-1}=DM^TD^{-1}. \numberthis
\]
By detailed balance, \(DM^T=MD\). Hence
\[
M^{-1}=DM^TD^{-1}=MDD^{-1}=M. \numberthis
\]
Therefore \(M^2=I\).
\end{proof}

\begin{theorem}[Racah structure of the intrinsic MacWilliams transform]
\label{thm:main-clean}
Let
\[
V_n=\Sym^n(\CC^q),
\qquad
\dim V_n=\binom{n+q-1}{n}, \numberthis
\]
and let
\[
\L(V_n)=\bigoplus_{a=0}^n \E_a,
\qquad
\E_a\cong(a,0,\dots,0,a), \numberthis
\]
be the multiplicity-free decomposition under conjugation by \(\SU(q)\).
Set
\[
d_a=\dim \E_a
=
\frac{2a+q-1}{q-1}
\binom{a+q-2}{q-2}^{2}. \numberthis
\]
Then the intrinsic MacWilliams matrix \(M=(M_{ba})_{0\le a,b\le n}\) is
given by
\begin{equation}
\label{eq:main-Mba}
M_{ba}
=
\frac{d_b}{\dim V_n}
\;
{}_4F_3\!\left(
\begin{matrix}
-b,\ b+q-1,\ -a,\ a+q-1\\
q-1,\ -n,\ n+q
\end{matrix}
;\,1
\right).
\end{equation}
The degree-one row is the affine quadratic spectral grid
\[
M_{1a}
=
\frac{q^2-1}{\dim V_n}
\left(
1-
\frac{q\,a(a+q-1)}{n(q-1)(n+q)}
\right). \numberthis
\]
Moreover, 
\[
MDM^T=D,
\qquad
M^2=I. \numberthis
\]
\end{theorem}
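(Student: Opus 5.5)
This theorem is a summary statement, so the plan is to assemble it from results already established rather than prove anything new. The decomposition of \(\L(V_n)\) and the dimension \(\dim V_n=\binom{n+q-1}{n}\) are recorded in \eqref{eq:Lv-decomp} and \eqref{eq:dimVn}. For the formula for \(d_a\), I would invoke the Weyl dimension formula for the highest weight \((a,0,\dots,0,a)\), exactly as used in Lemma~\ref{lem:racah-weight}. As a quick check, \(a=1\) gives \((q+1)\cdot(q-1)=q^2-1\), which is the adjoint dimension, and \(a=0\) gives \(1\).

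For the explicit formula \eqref{eq:main-Mba}, I would cite Corollary~\ref{cor:explicit-M}. Its logical chain is the following. Lemma~\ref{lem:twirl1-eigenvalues} gives the affine quadratic spectral grid, and Lemma~\ref{lem:x-distinct} gives distinctness of the \(x_a\). Then Lemma~\ref{prop:twirl1-generates} realizes each row of \(M\) as the values of a polynomial \(p_b\). Tridiagonality (Lemma~\ref{lem:tridiagonality}) together with Lemma~\ref{lem:forward-nonzero} yields \(\deg p_b=b\) via Lemma~\ref{lem:three-term-and-degrees}. Plancherel then makes \(\{p_b\}\) orthogonal for the measure \(\sum_a d_a\delta_{x_a}\). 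Lemma~\ref{lem:racah-weight} identifies this measure with the Racah weight on the lattice \(y_a=a(a+q-1)\), and Lemma~\ref{lem:Mb0} fixes the normalization. Theorem~\ref{thm:racah-identification} then gives the \({}_4F_3\) formula.

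The degree-one row follows in either of two ways. One can quote Lemma~\ref{lem:twirl1-eigenvalues} directly. Alternatively, one can set \(b=1\) in \eqref{eq:main-Mba}: the series terminates after its linear term, and \(d_1=q^2-1\) gives the same expression, which serves as a consistency check.

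Finally, \(MDM^T=D\) is Proposition~\ref{prop:orthogonality}: Plancherel \eqref{eq:plancherel}, combined with the fact that the \({}_4F_3\) formula has real entries. The involution \(M^2=I\) is Corollary~\ref{cor:M-involutive}. It follows from the inverse formula (Proposition~\ref{prop:inverse-transform}) together with detailed balance (Proposition~\ref{prop:detailed-balance}), which rests on the symmetry of the terminating \({}_4F_3\) under swapping \((-a,a+q-1)\leftrightarrow(-b,b+q-1)\).

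No step is genuinely hard given these results. The only step I would double-check is the uniqueness argument in Theorem~\ref{thm:racah-identification}. It requires that the Racah polynomials \(R_b\) satisfy \(R_b(y_0)=1\), which is automatic from the \({}_4F_3\) at \(a=0\). It also requires \(Q_b(0)\neq0\), so that the normalization is legitimate; this holds because \(M_{b0}=d_b/\dim V_n>0\). Granted these two points, the theorem follows by collecting the cited results.
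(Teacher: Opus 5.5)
Your proposal is correct and follows the paper's proof. That proof likewise just collects the setup, Lemma~\ref{lem:twirl1-eigenvalues}, the Racah identification packaged in Corollary~\ref{cor:explicit-M}, Propositions~\ref{prop:orthogonality} and~\ref{prop:detailed-balance}, and Corollary~\ref{cor:M-involutive}.

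Your caution about the uniqueness step in Theorem~\ref{thm:racah-identification} is well placed. The clean justification is that, for a positive measure on $n+1\ge b+1$ points, the degree-$b$ orthogonal polynomial is unique up to a scalar, so $Q_b=cR_b$, and $Q_b(y_0)=R_b(y_0)=1$ forces $c=1$. The appendix's claim that $Q_b-R_b$ has degree at most $b-1$ is not automatic, since the leading coefficients need not agree a priori.
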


\begin{proof}
The decomposition and sector dimensions are established in
Section~\ref{sec:setup}. The spectral computation of the degree-one row is
Lemma~\ref{lem:twirl1-eigenvalues}.  Orthogonality, detailed balance, and involutivity are
Propositions~\ref{prop:orthogonality}, \ref{prop:detailed-balance}, and
Corollary~\ref{cor:M-involutive}.
\end{proof}

\section*{Acknowledgments}

Thank you to Eric Kubischta for helpful discussions and for reviewing an early version of this manuscript.

\bibliographystyle{IEEEtran}
\bibliography{biblio}

\appendices

\makeatletter
\renewcommand{\theequation}{A\arabic{equation}}
\renewcommand{\thetable}{A\arabic{table}}
\renewcommand{\thefigure}{A\arabic{figure}}
\renewcommand{\thelemma}{A\arabic{lemma}}
\renewcommand{\thetheorem}{A\arabic{theorem}}
\setcounter{table}{0}
\setcounter{figure}{0}
\setcounter{lemma}{0}
\setcounter{theorem}{0}
\setcounter{equation}{0}

\section{Casimir Eigenvalues and Normalization}
\label{app:casimir}

This appendix records the Casimir eigenvalues used in
Section~\ref{sec:x_a}.  We use the Hermitian-generator convention for
\(\mathfrak{su}(q)\): thus \(\mathfrak{su}(q)\) denotes the real vector
space of traceless Hermitian \(q\times q\) matrices, with Lie bracket
\(i[A,B]\).  The generators are normalized by
\begin{equation}
  \label{eq:app-generator-normalization}
  \Tr(T_\mu T_\nu)=\delta_{\mu\nu}.
\end{equation}

Let \(W_\lambda\) be the irreducible \(\SU(q)\)-module of highest weight
\(\lambda\).  With the normalization \eqref{eq:app-generator-normalization},
the quadratic Casimir
\[
  \sum_{\mu=1}^{q^2-1}d\rho_\lambda(T_\mu)^2
\]
acts on \(W_\lambda\) as the scalar
\begin{equation}
  \label{eq:app-casimir-formula}
  c(\lambda)=(\lambda,\lambda+2\rho)_{\rm rt},
\end{equation}
where \((\cdot,\cdot)_{\rm rt}\) is the standard type \(A_{q-1}\) inner
product normalized so that all roots have squared length \(2\), and
\(\rho\) is the Weyl vector.

We use the standard type \(A_{q-1}\) identities
\begin{equation}
  \label{eq:app-fundamental-inner-product}
  (\omega_i,\omega_j)_{\rm rt}
  =
  \min(i,j)-\frac{ij}{q},
  \qquad
  \rho=\omega_1+\cdots+\omega_{q-1}.
\end{equation}
In particular,
\[
  (\omega_1,\omega_1)_{\rm rt}=\frac{q-1}{q},
  \qquad
  (\omega_1,\rho)_{\rm rt}=\frac{q-1}{2}.
\]

For the physical representation
\[
  V_n=\Sym^n(\mathbb C^q),
\]
the highest weight is \(\lambda_V=n\omega_1\).  Hence
\begin{align}
  c_V
  &=
  (n\omega_1,n\omega_1+2\rho)_{\rm rt} \nonumber\\
  &=
  n^2(\omega_1,\omega_1)_{\rm rt}
  +
  2n(\omega_1,\rho)_{\rm rt} \nonumber\\
  &=
  n^2\frac{q-1}{q}+n(q-1)
  =
  \frac{n(q-1)(n+q)}{q}.
\end{align}
Thus
\begin{equation}
  \label{eq:app-cV}
  c_V=\frac{n(q-1)(n+q)}{q}.
\end{equation}

For the conjugation sector
\[
  \E_a\cong(a,0,\dots,0,a),
\]
the highest weight is
\[
  \lambda_a=a\omega_1+a\omega_{q-1}.
\]
Using
\[
  \omega_1+\omega_{q-1}=\varepsilon_1-\varepsilon_q
\]
in the traceless weight space, we may write
\[
  \lambda_a=a(\varepsilon_1-\varepsilon_q).
\]
Since
\[
  (\varepsilon_i,\varepsilon_j)_{\rm rt}
  =
  \delta_{ij}-\frac1q,
\]
we have
\[
  (\lambda_a,\lambda_a)_{\rm rt}=2a^2.
\]
Also, for type \(A_{q-1}\),
\[
  2\rho=\sum_{1\le i<j\le q}(\varepsilon_i-\varepsilon_j),
\]
so
\[
  (\lambda_a,2\rho)_{\rm rt}=2a(q-1).
\]
Therefore
\begin{align}
  c_a
  &:=
  c(\lambda_a)
  =
  (\lambda_a,\lambda_a+2\rho)_{\rm rt} \nonumber\\
  &=
  2a^2+2a(q-1)
  =
  2a(a+q-1).
\end{align}
Thus
\begin{equation}
  \label{eq:app-ca}
  c_a=2a(a+q-1).
\end{equation}

Equivalently, the conjugation Casimir
\[
  \mathscr C(X)=\sum_{\mu=1}^{q^2-1}[J_\mu,[J_\mu,X]]
\]
acts on \(\E_a\) as
\begin{equation}
  \label{eq:app-conj-casimir-on-Ea}
  \mathscr C\big|_{\E_a}=2a(a+q-1)I_{\E_a}.
\end{equation}

\section{Racah Polynomial Convention}
\label{app:racah}

This appendix records the Racah polynomial convention used in
Section~\ref{sec:suq-racah}. We use the standard Racah normalization from Section 9.2 of
\cite{KoekoekLeskySwarttouw2010}.  We follow the normalization in which the
Racah polynomial is equal to \(1\) at the first lattice point.

\subsection{Definition}

The Racah polynomial of degree \(b\) is a polynomial in the quadratic
lattice variable
\[
  \lambda(a)=a(a+\gamma+\delta+1)
\]
defined by
\begin{equation}
\label{eq:app-racah-def}
R_b(\lambda(a))
=
{}_4F_3\!\left(
\begin{matrix}
-b,\ b+\alpha+\beta+1,\ -a,\ a+\gamma+\delta+1\\
\alpha+1,\ \beta+\delta+1,\ \gamma+1
\end{matrix}
;1
\right).
\end{equation}
The series terminates because of the numerator parameter \(-b\), and the
normalization is
\[
  R_b(\lambda(0))=1.
\]

In this paper we use
\begin{equation}
\label{eq:app-racah-params}
\alpha=q-2,\qquad
\beta=0,\qquad
\gamma=n+q-1,\qquad
\delta=-n-1.
\end{equation}
Then
\[
  \gamma+\delta+1=q-1,
  \qquad
  \beta+\delta+1=-n,
\]
so the finite support is \(a=0,1,\dots,n\), and the lattice becomes
\begin{equation}
\label{eq:app-racah-lattice}
  \lambda(a)=a(a+q-1).
\end{equation}
With these parameters,
\begin{equation}
\label{eq:app-racah-4f3}
R_b(\lambda(a))
=
{}_4F_3\!\left(
\begin{matrix}
-b,\ b+q-1,\ -a,\ a+q-1\\
q-1,\ -n,\ n+q
\end{matrix}
;1
\right).
\end{equation}

\subsection{Orthogonality weight}

For the Racah convention in \eqref{eq:app-racah-def}, the standard finite
orthogonality weight is proportional to
\[
w(a)
\propto
\frac{2a+\gamma+\delta+1}{\gamma+\delta+1}
\frac{(\gamma+\delta+1)_a(\alpha+1)_a(\beta+\delta+1)_a(\gamma+1)_a}
{a!\,(\gamma+\delta-\alpha+1)_a(\gamma-\beta+1)_a(\delta+1)_a}.
\]
Here \((z)_a\) denotes the Pochhammer symbol.  Substituting
\eqref{eq:app-racah-params} gives, up to a scalar independent of \(a\),
\[
w(a)
\propto
\frac{2a+q-1}{q-1}
\frac{(q-1)_a^2(-n)_a(n+q)_a}
{a!\,(1)_a(n+q)_a(-n)_a}
=
\frac{2a+q-1}{q-1}
\frac{(q-1)_a^2}{(a!)^2}.
\]
Since
\[
  \frac{(q-1)_a}{a!}
  =
  \binom{a+q-2}{q-2},
\]
we obtain
\begin{equation}
\label{eq:app-racah-weight}
w(a)
\propto
\frac{2a+q-1}{q-1}
\binom{a+q-2}{q-2}^{2}.
\end{equation}
Thus the Racah weight is proportional to the sector dimension
\begin{equation}
\label{eq:app-racah-weight-da}
d_a
=
\frac{2a+q-1}{q-1}
\binom{a+q-2}{q-2}^{2}.
\end{equation}

\subsection{Uniqueness}

We use the following elementary uniqueness fact.  Let
\[
  \mu=\sum_{a=0}^n w_a\,\delta_{y_a}
\]
be a positive measure on \(n+1\) distinct points, with \(w_a>0\).  For each
\(0\le b\le n\), there is at most one polynomial \(Q_b\) of degree \(b\)
which is orthogonal to all polynomials of degree \(<b\) and satisfies a
fixed nonzero normalization such as
\[
  Q_b(y_0)=1.
\]
Indeed, if two such polynomials existed, their difference would have degree
at most \(b-1\), would be orthogonal to all polynomials of degree \(<b\),
and hence would be orthogonal to itself; positivity of \(\mu\) forces the
difference to vanish.

Therefore, once the lattice and weight are identified with
\eqref{eq:app-racah-lattice} and \eqref{eq:app-racah-weight}, any
degree-\(b\) orthogonal polynomial normalized by \(Q_b(y_0)=1\) agrees with
the Racah polynomial \eqref{eq:app-racah-4f3}.

\end{document}